\definecolor{dartmouthgreen}{rgb}{0.05, 0.5, 0.06}
\newcommand{\bra}[1]{\langle {#1} |}
\newcommand{\ket}[1]{| {#1} \rangle}
\newtheorem{theorem}{Theorem}
\newtheorem{proposition}{Proposition}
\newtheorem{corollary}{Corollary}
\begin{document}

\title{Entanglement-assisted classical communication can simulate classical \\communication without causal order}
\author{Seiseki Akibue}
\email{akibue.seiseki@lab.ntt.co.jp}
\affiliation{Department of Physics, Graduate School of Science, The University of Tokyo 7-3-1 Hongo, Bunkyo-ku, Tokyo 113-0033, JAPAN}
 \affiliation{NTT Communication Science Laboratories, NTT Corporation 3-1 Morinosato Wakamiya, Atsugi-shi, Kanagawa 243-0124, JAPAN}
 
\author{Masaki Owari}
\email{masakiowari@inf.shizuoka.ac.jp}
\affiliation{Department of Computer Science, Faculty of Informatics, Shizuoka University, 3-5-1 Johoku, Naka-ku, Hamamatsu 432-8011, JAPAN}

\author{Go Kato}
\email{kato.go@lab.ntt.co.jp}
 \affiliation{NTT Communication Science Laboratories, NTT Corporation 3-1 Morinosato Wakamiya, Atsugi-shi, Kanagawa 243-0124, JAPAN}

\author{Mio Murao}
\email{murao@phys.s.u-tokyo.ac.jp}
\affiliation{Department of Physics, Graduate School of Science, The University of Tokyo 7-3-1 Hongo, Bunkyo-ku, Tokyo 113-0033, JAPAN}
\date{\today}

\begin{abstract}
Phenomena induced by the existence of entanglement, such as nonlocal correlations, exhibit characteristic properties of quantum mechanics distinguishing from classical theories. When entanglement is accompanied by classical communication, it enhances the power of quantum operations jointly performed by two spatially separated parties.  Such a power has been analyzed by the gap between the performances of joint quantum operations implementable by local operations at each party connected by classical communication with and without the assistance of entanglement.  In this work, we present a new formulation for joint quantum operations connected by classical communication beyond special relativistic causal order but without entanglement and still within quantum mechanics.   Using the formulation, we show that entanglement assisting classical communication necessary for implementing a class of joint quantum operations called \textit{separable operations} can be interpreted to simulate ``classical communication'' not respecting causal order.  Our results reveal a new counter-intuitive aspect of entanglement related to spacetime.
\end{abstract}

\maketitle

\section{Introduction}
Entanglement induces many counter-intuitive phenomena that cannot be described by classical physics, such as the existence of the {\it nonlocal correlations} formulated by Bell and CHSH \cite{Bell}, where entanglement can be interpreted to enhance correlation in space.
However, such correlations produced by entanglement cannot be used for superluminal communication between two parties \cite{PRbox}. For communication, two parties have to be  within a distance across which light can travel, namely, {\it timelike} separated in both quantum and classical cases.

The power of entanglement in relation to time also arises when it is accompanied by classical communication. Quantum teleportation \cite{teleportation},  a protocol for transmitting quantum information from a party (sender) to another timelike separated party (receiver) by using shared entanglement and classical communication from the sender to the receiver, is one example.  Although quantum communication or quantum communication from a mediating third party to two parties is necessary to share a fixed entangled state, this event can be accomplished before the event where the sender \textit{decides} what quantum information to send.  In contrast, the event of the decision should be made before the event of quantum communication in direct quantum communication.  Thus we can slightly ``dodge'' the time-line of the event of quantum communication when entanglement and classical communication are used for transmitting quantum information.

In this paper, we show a new aspect of a power of entanglement accompanied by classical communication. We consider a class of joint quantum operations of two parties not sharing entanglement but communicating by fictatious {\it ``classical communication" without predefined causal order}. We show that the class coincides with a well known subclass of joint quantum operations of two parties sharing entanglement and communicating by (normal) classical communication. This can be interpreted as a new counter-intuitive effect of entanglement shared between the parties assisting classical communication that simulates the ``classical communication" without predefined causal order.

We define ``classical communication" without predefined causal order by extending the formulation of a special class of deterministic joint quantum operations implementable without entanglement between the parties called {\it local operations and classical communication} (LOCC) \cite{VVedral, MBPlenio, Horodecki, CLMOW12}, which is widely used in the field of quantum information for investigating entanglement and nonlocal properties. 
In LOCC, each party performs quantum operations and quantum operations are connected by communication channels. Each quantum operation is well localized in a spacetime coordinate, and thus called a {\it local operation}. We can define the partial order of the spacetime coordinates of the local operations following special relativity, which is referred to as {\it causal order} \cite{OFC, Chiribella1, Chiribella2}. Following the causal order, local operations performed by different parties can be connected by classical communication, whereas local operations within each party can be connected by quantum communication. 

In an attempt to merge quantum mechanics and general relativity, however, the existence of the causal order of local operations in spacetime may not be a fundamental requirement of nature.   It has been recently shown that interesting phenomena arise if the partial order of local operations constituting joint quantum operations is not predefined \cite{OFC, Chiribella1, Chiribella2}.
Alternative representations of communication that are not based on predefined causal order have been proposed \cite{OFC, Hardy}.

We extend LOCC in such a way that we include joint quantum operations between two parties implemented without shared entanglement but with ``classical communication'' without predefined causal order denoted by CC* connecting local operations performed by different parties. We name a new class of deterministic joint quantum operations obtained by this extension but still within quantum mechanics as LOCC*.  
First, we show that LOCC* with CC* respecting causal order is reduced to LOCC.  
Second, we show that LOCC* can be represented by the $\infty$-shaped loop shown in Fig.\ref{fig:loop}.  We also show that it is equivalent to a certain class of deterministic joint quantum operations known as {\it separable operations} \cite{Rains, 9state}, which has been introduced for mathematical simplicity to analyze nonlocal quantum tasks in place of LOCC.  
Third, by considering the correspondence between LOCC* and a probabilistic version of LOCC called stochastic LOCC, we analyze the power of CC* in terms of enhancing the success probability of probabilistic operations in stochastic LOCC. We also investigate the relationship between LOCC* and the process matrix formalism for joint quantum operations without predefined causal order developed in \cite{OFC}. Note that CC* does not contain classical communication to the past that is inconsistent with local operations in LOCC*.

Similar to LOCC, separable operations cannot create entanglement between two parties initially sharing no entanglement. However, some separable operations are {\it not} implementable by LOCC \cite{9state, JNiset, DiVincezo, EChitambar}. Moreover, there exists a separable operation not implementable by LOCC even when infinitely many rounds of classical communications are allowed \cite{CLMOW12}. Since any deterministic joint quantum operations can be implemented by two parties initially sharing entanglement and performing an LOCC protocol, shared entanglement is necessary to implement  non-LOCC separable operations. On the other hand, the ``classical communication" not respecting causal order is necessary to implement the non-LOCC  separable operations without the help of entanglement.  
Combining these two facts, we see that an operational interpretation of separable operations in terms of CC* is given. This interpretation suggests that for implementing non-LOCC separable operations, entanglement assisted classical communication (respecting causal order) simulates ``classical communication'' not respecting causal order.

This paper is organized as follows: In Section II, we first review the formulation of LOCC and then extend it for introducing and defining LOCC* and CC*. We also show that LOCC* can be represented by the $\infty$-shaped loop. In Section III, we construct an example of LOCC* not in LOCC and show that LOCC* is equivalent to the class of separable operations. In Section IV, we present the relationship between LOCC* and stochastic LOCC. In Section V, we investigate the relationship between LOCC* and the process matrix. The last section is devoted to conclusion.

\section{Formulation of LOCC*}

To introduce LOCC*, we first review the formulation of quantum operations and LOCC. In this paper, we consider only finite dimensional quantum systems.   A quantum operation is the most general map transforming an input to an output.    We can consider both quantum states and classical bits as either the input or output of a quantum operation.  The most general quantum operation describes a probabilistic situation where a transformation depending on a classical input is applied to an arbitrary quantum state and then a classical output and a quantum output are probabilistically obtained.   The probabilistic transformation can be interpreted as a quantum measurement, and the classical output corresponds to the outcome of the measurement. A deterministic quantum operation can be understood as a special case of a general quantum operation, which only takes the quantum output. We can also obtain a deterministic quantum operation from a probabilistic quantum operation by averaging over all possible probabilistic quantum transformations with a weight determined by the probability of the classical output.  This averaging process can be  considered as discarding information regarding the classical output.

Mathematically, a general quantum operation can be represented by a {\it quantum instrument} conditioned by a classical input $i$, which is described by a family of linear maps $ \{ \mathcal{M}_{o|i} \}_o$ that transforms a quantum input state $ \rho$ to a quantum output state given by $\mathcal{M}_{o|i} ( \rho) / p(o|i) $ associated with a classical output $o$ ($o=1,2,\cdots, n$) with a conditional probability  distribution $p(o|i)={\rm tr}[\mathcal{M}_{o|i} ( \rho) ]$.  Each element of the instrument $\mathcal{M}_{o|i}$ has to be a {\it completely positive} (CP) map to describe quantum operations allowed in quantum mechanics.  The CP property of the map is required to enable a quantum operation to be performed only on a subsystem of a composite system while keeping the total state of the composite system a valid quantum state. 
Since the average of all transformations $\sum_o p(o|i)\cdot \mathcal{M}_{o|i}/p(o|i)=\sum_o\mathcal{M}_{o|i}$ describes a deterministic quantum operation, it has to be a {\it trace preserving} (TP) map, that is  the probability of obtaining the quantum output of the averaged map is given by ${\rm tr} \left[\sum_o \mathcal{M}_{o|i} ( \rho)\right] ={\rm tr} \left[ \rho\right]= 1$.
A quantum operation without classical input is a quantum instrument denoted by $\{ \mathcal{M}_o \}_o$. Note that the subscripts of a family represent classical outputs, over which sum of elements in the family is a CPTP map. A quantum operation without classical output is  a quantum instrument whose element is a CPTP map, and it is denoted simply by $\mathcal{M}_{|i}$ or $\mathcal{M}$ (instead of $\{\mathcal{M}_{|i}\}$ or $\{\mathcal{M}\}$).
Graphical representations of these four types of quantum operations ($\{\mathcal{M}_{o|i}\}_o$, $\{\mathcal{M}_{o}\}_o$, $\mathcal{M}_{|i}$ and $\mathcal{M}$) are  given in Fig.\ref{fig:instruments}.  

\begin{figure}
 \centering
  \includegraphics[height=.12\textheight]{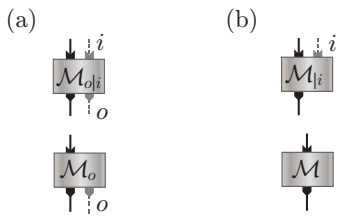}
  \caption{ Graphical representations of  four types of quantum operations, $\{\mathcal{M}_{o|i}\}_o$, $\{\mathcal{M}_{o}\}_o$, $\mathcal{M}_{|i}$ and $\mathcal{M}$. A box represents a quantum operation. The upper terminals of a box represent inputs and the lower ones represent outputs. Dotted lines represent classical input or classical output and solid lines represent quantum one. (a) Probabilistic operations have a classical output corresponding to the outcome of the measurement.  (b) Deterministic operations do not provide a classical output.}
\label{fig:instruments}
\end{figure} 

We consider a deterministic joint quantum operation (a CPTP map) taking only the quantum input and output denoted by $\mathcal{M}$ implemented by two parties, Alice and Bob,  who are connected only by classical communication.  We set Alice to perform the first operation.   As the simplest case, we consider linking a classical output  $o$ of Alice's local operation represented by $\{ \mathcal{A}_o \}_o$ and a classical input $i$ of Bob's local operation represented by $\mathcal{B}_{|i}$.   Since Alice and Bob are acting on different quantum systems at different spacetime coordinates,  the joint quantum operation is described by a tensor product of two local operations.  The classical communication ability indicates that the spacetime coordinate of Alice's local operation and that of Bob's local operation are timelike separated, and Bob's coordinate is in the future cone of Alice's coordinate.   Linking the classical output of Alice $o$ and the classical input of Bob $i$ means that they are perfectly correlated, namely setting $i=o$ for all $o$.   By taking averages over $o$ and $i$,  we obtain a deterministic joint quantum operation given by 
\begin{equation}
\mathcal{M}=\sum_{o,i}\delta_{i,o} \mathcal{A}_o\otimes\mathcal{B}_{|i}=\sum_{o} \mathcal{A}_o\otimes\mathcal{B}_{|o},
\label{eq:LOCC1}
\end{equation}
where  $\delta_{o,i}$ denotes the Kronecker delta.  This is the simplest case of LOCC, which is called one-way LOCC. 

A deterministic joint operation represented by more general  finite-round LOCC between two parties is defined by  connecting a sequence of Alice's local operations given by $\{ \mathcal{A}^{(N)}_{o_N|i_N}\circ\cdots\circ \mathcal{A}^{(1)}_{o_1} \}_{o_N,\cdots,o_1}$ and a sequence of Bob's local operations  given by $\{\mathcal{B}^{(N)}_{o'_N|i'_N}\circ\cdots\circ \mathcal{B}^{(1)}_{o'_1|i'_1} \}_{o'_N,\cdots,o'_1}$.   Here $\circ$ denotes a connection between two local operations at timelike separated coordinates linking the quantum output of a local operation and the quantum input of the next local operation of the {\it same} party and introduces a total order for each party's local operations.  The indices $i_k$ and $i'_k$ are classical inputs of the $k$-th operations and $o_k$ and $o'_k$ are classical outputs of the $k$-th operations of Alice and Bob, respectively.   

In LOCC, the local operations within the parties are totally ordered, and the spacetime coordinates of all the local operations of Alice and Bob are totally ordered alternately.   Then $o_1$ of Alice's local operation and $i'_1$ of Bob's local operation are linked by classical communication and setting $i'_1=o_1$, and similarly, $o'_1$ of Bob's local operation and $i_{2}$ of Alice's local operation are linked by setting $i_{2}= o'_1$ and so on.  Thus,  finite-round LOCC between the two parties is defined by a set of deterministic joint quantum operations represented by
\begin{eqnarray}
\mathcal{M}&=&\sum_{i_1,\cdots,i'_N,o_1,\cdots, o'_N} 
\delta_{o_N,i'_N} \cdots \delta_{o_1,i'_1}\delta_{i_1,1}
\nonumber\\&& \mathcal{A}^{(N)}_{o_N|i_N}\circ\cdots\circ \mathcal{A}^{(1)}_{o_1|i_1}\otimes \mathcal{B}^{(N)}_{o'_N|i'_N}\circ\cdots\circ \mathcal{B}^{(1)}_{o'_1|i'_1} 
\label{eq:LOCC2} \\
&=& \sum_{} \mathcal{A}^{(N)}_{o_N|i_N}\circ\cdots\circ \mathcal{A}^{(1)}_{o_1|1}\otimes\mathcal{B}^{(N)}_{o_N'|o_N}\circ\cdots\circ \mathcal{B}^{(1)}_{i_2|o_1}
\end{eqnarray}
 where we define $\mathcal{A}^{(1)}_{o_1|1} := \mathcal{A}^{(1)}_{o_1}$. An example for $N=3$ is shown in Fig.\ref{fig:twoCC}.  

\begin{figure}
 \centering
  \includegraphics[height=.30\textheight]{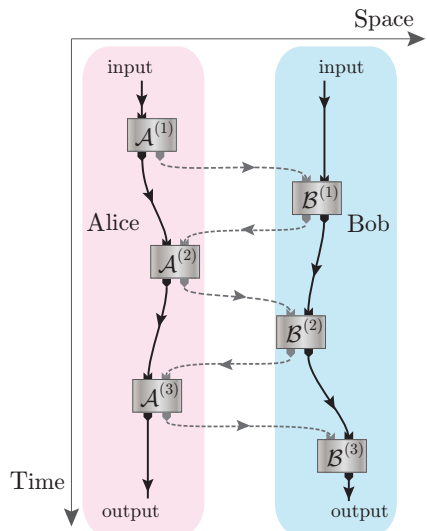}
  \caption{Joint quantum operations of two parties, Alice and Bob, consisting of local operations. Dotted arrows represent classical communication between local operations and solid arrows represent quantum communications between local operations. In an LOCC protocol, local operations are  totally ordered.  Local operations performed by different parties are connected only by classical communication, whereas local operations within each party are connected by quantum communication. }
\label{fig:twoCC}
\end{figure} 

Now we develop a framework to investigate local operations and classical communication between the parties without the assumption of the existence of the predefined ordering of the spacetime coordinates.   That is, we keep the local spacetime coordinates  and the totally ordered structure of the local operations within each party but we do not assign the {\it global} spacetime coordinate across the parties. Moreover, we allow the connection of any classical inputs and outputs of local operations between different parties as long as the resulting deterministic joint quantum operation falls within the scope of quantum mechanics. This relaxation allows the generalization of classical communication to a conditional probability distribution $p(i_1,\cdots,i'_N|o_1,\cdots,o'_N)$ linking the classical outputs and classical inputs of local  operations.
The  corresponding joint quantum operation is represented by
\begin{eqnarray}
\mathcal{M}&=&\sum_{i_1,\cdots,i'_N,o_1,\cdots,o'_N} p(i_1,\cdots,i'_N|o_1,\cdots,o'_N)\nonumber\\
&&\mathcal{A}^{(N)}_{o_N|i_N}\circ\cdots\circ \mathcal{A}^{(1)}_{o_1|i_1}\otimes \mathcal{B}^{(N)}_{o'_N|i'_N}\circ\cdots\circ \mathcal{B}^{(1)}_{o'_1|i'_1} .
\label{eq:LOCC*0}
\end{eqnarray}
This generalization does not guarantee that the joint quantum operation represented by Eq.~\eqref{eq:LOCC*0} is TP  whereas its CP property is preserved.   Since we are investigating deterministic joint quantum operations, we require $p(i_1,\cdots,i'_N|o_1,\cdots,o'_N)$ to keep the form of Eq.~\eqref{eq:LOCC*0} to represent a CPTP map.   We refer to a set of CPTP maps in the form of  Eq.~\eqref{eq:LOCC*0}  with $p(i_1,\cdots,i'_N|o_1,\cdots,o'_N)$  as LOCC*.  

For one-way LOCC,  such a generalization corresponds to replacing the delta function $\delta_{o,i}$ in Eq.~\eqref{eq:LOCC2} with a conditional probability distribution $ p(i|o) $. This is equivalent to replacing a perfect classical channel with a general noisy classical  channel. However, this is not the case for multiple-round LOCC.  As the first result,  we show that a joint quantum operation is in LOCC if and only if it can be decomposed in the form of Eq.~(\ref{eq:LOCC*0}) with $p(i_1,\cdots,i'_N|o_1,\cdots,o'_N)$ respecting the {\it causal order} of the classical inputs and outputs of local operations imposed by special relativity, namely, the causal order determined by the partial order structure of the spacetime where local operations are performed. Rigorous definitions and the proof are given in Appendix A.

It is easy to check that without loss of generality, any operations in LOCC* can be implemented by just one quantum operation performed by each party connected by a conditional probability distribution, since by letting $o_A:=(o_1,\cdots,o_N)$, $i_A:=(i_1,\cdots,i_N)$, $o_B:=(o_1',\cdots,o_N')$ and $i_B:=(i_1',\cdots,i_N')$, we can regard the sequence of Alice's local operations $\{ \mathcal{A}^{(N)}_{o_N|i_N}\circ\cdots\circ \mathcal{A}^{(1)}_{o_1|i_1} \}_{o_N,\cdots,o_1}$ as one quantum instrument $\{\mathcal{A}_{o_A|i_A}\}_{o_A}$ conditioned by the classical input $i_A$, as with Bob's local operations and $p(i_A,i_B|o_A,o_B)$ is still a conditional probability distribution.   

Thus, LOCC* is simply defined by a set of CPTP maps $\mathcal{M}$ given in the form of
\begin{eqnarray}
\mathcal{M}=\sum_{i_A,i_B,o_A,o_B}p(i_A,i_B|o_A,o_B)\mathcal{A}_{o_A|i_A}\otimes\mathcal{B}_{o_B|i_B},
\label{eq:LOCC*}
\end{eqnarray}
where $p(i_A,i_B|o_A,o_B)$ is a  conditional probability distribution, $\{ \mathcal{A}_{o_A|i_A} \}_{o_A}$ is Alice's quantum operation with a classical input $i_A$ and a classical output $o_A$, and $\{ \mathcal{B}_{o_B|i_B} \}_{o_B}$ is Bob's quantum operation with a classical input $i_B$ and a classical output $o_B$.
Note that the quantum operation performed by each party is not necessary to be localized in a spacetime coordinate, however, we refer to them as {\it local} operations to distinguish them from a {\it joint} operation.
We refer to $p(i_A,i_B|o_A,o_B)$ keeping the form of Eq.~\eqref{eq:LOCC*} to be a CPTP map as CC*, namely {\it ``classical communication'' without predefined causal order}, with respect to local operations $\{\mathcal{A}_{o_A|i_A}\}_{o_A}$ and $\{\mathcal{B}_{o_B|i_B}\}_{o_B}$.   Note that we can also ``collapse'' the sequential local operations in  multi-round LOCC to represent a joint quantum operation in the form of Eq.~\eqref{eq:LOCC*}.  However, in this case if we regard the combined  operations $\{\mathcal{A}_{o_A|i_A}\}_{o_A}$ and $\{\mathcal{B}_{o_B|i_B}\}_{o_B}$ as operations localized in the spacetime, the  corresponding $p(i_A,i_B|o_A,o_B)$ cannot be interpreted as classical communication respecting causal order.

\begin{figure}
 \centering
  \includegraphics[height=.20\textheight]{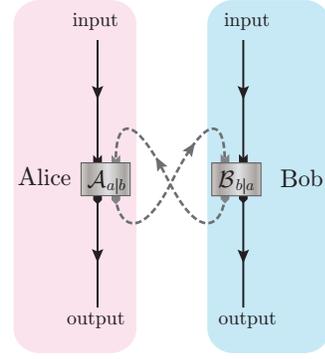}
  \caption{A deterministic joint quantum operation consisting of local operations connected by the $\infty$-shaped loop. Any element of LOCC* can be represented by this joint quantum operation and vice versa. }
\label{fig:loop}
\end{figure} 

As the second result, LOCC* can always be represented by a $\infty$-shaped  loop as shown in Fig.\ref{fig:loop}. Namely, we can show the following theorem.
\begin{theorem}
 $\mathcal{M}$ is LOCC* if and only if $\mathcal{M}$ is a CPTP map and can be decomposed into the form of
 \begin{eqnarray}
\mathcal{M}=\sum_{a,b}\mathcal{A}_{a|b}\otimes\mathcal{B}_{b|a},
\label{eq:LOSC}
\end{eqnarray}
where $\{\mathcal{A}_{a|b}\}_a$ and $\{\mathcal{B}_{b|a}\}_b$ are local operations performed by Alice and Bob, respectively.
\end{theorem}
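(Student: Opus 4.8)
The plan is to prove the two directions of the equivalence separately, disposing of the easy inclusion first. For the ``if'' direction, I would note that the loop form \eqref{eq:LOSC} is just the special case of the LOCC* form \eqref{eq:LOCC*} in which the conditional distribution is the noiseless channel that identifies each party's output label with the other party's input label. Concretely, taking $p(i_A,i_B|o_A,o_B)=\delta_{i_A,o_B}\delta_{i_B,o_A}$ and relabeling the instruments with $a=o_A=i_B$ and $b=o_B=i_A$, substitution into \eqref{eq:LOCC*} collapses the four sums to $\sum_{a,b}\mathcal{A}_{a|b}\otimes\mathcal{B}_{b|a}$; since $\mathcal{M}$ is assumed CPTP this already exhibits it as an element of LOCC*. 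The only things to check — that $\delta_{i_A,o_B}\delta_{i_B,o_A}$ is a legitimate conditional probability distribution and that $\{\mathcal{A}_{a|b}\}_a$ and $\{\mathcal{B}_{b|a}\}_b$ are valid instruments — are immediate.

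For the ``only if'' direction, the idea is to absorb the general classical channel $p(i_A,i_B|o_A,o_B)$ into enlarged local instruments whose only inter-party link is the perfect label-sharing of the loop. I would enlarge the loop labels to $a=(o_A,i_B)$ and $b=(o_B,i_A)$, so that Alice's loop output carries both her outcome $o_A$ and the value $i_B$ destined for Bob, and symmetrically for Bob. The weight $p$ must then be split between the two parties, and the key device is a chain-rule factorization at each fixed $(o_A,o_B)$: writing $p(i_A,i_B|o_A,o_B)=p_A(i_A|o_A,o_B)\,p(i_B|i_A,o_A,o_B)$ with $p_A$ the marginal over $i_B$, I would set $\mathcal{A}_{a|b}:=p(i_B|i_A,o_A,o_B)\,\mathcal{A}_{o_A|i_A}$ and $\mathcal{B}_{b|a}:=p_A(i_A|o_A,o_B)\,\mathcal{B}_{o_B|i_B}$ (with the usual convention on zero-probability events). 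Multiplying and summing over $a,b$ then reproduces $\mathcal{M}$, since the two scalar factors multiply back to $p$.

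The main obstacle is verifying that these enlarged families are genuine quantum instruments, i.e. that summing over each party's loop output gives a trace-preserving map; this is exactly what dictates the asymmetric chain-rule split. The factor assigned to Alice must normalize over $i_B$ for every $o_A$, so that $\sum_{o_A,i_B}p(i_B|i_A,o_A,o_B)\,\mathcal{A}_{o_A|i_A}=\sum_{o_A}\mathcal{A}_{o_A|i_A}$ stays CPTP, while the factor assigned to Bob must normalize over $i_A$ for every $o_B$, so that $\sum_{o_B,i_A}p_A(i_A|o_A,o_B)\,\mathcal{B}_{o_B|i_B}=\sum_{o_B}\mathcal{B}_{o_B|i_B}$ stays CPTP; both identities hold because $\sum_{i_B}p(i_B|i_A,o_A,o_B)=1$ and $\sum_{i_A}p_A(i_A|o_A,o_B)=1$, respectively. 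An arbitrary factorization of $p$ need not produce two factors with these complementary normalizations, so the content of the argument is precisely the construction of such a pair; complete positivity of each element is then automatic, being a nonnegative scalar times a CP map. Combining the two directions yields the stated equivalence.
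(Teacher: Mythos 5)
Your proof is correct, and the easy direction ($p=\delta_{i_A,o_B}\delta_{i_B,o_A}$) is identical to the paper's. For the converse you take a genuinely different decomposition. The paper absorbs the \emph{entire} channel into Alice's enlarged instrument by summing over her own classical input, setting $\tilde{\mathcal{A}}_{i_B,x|o_A,o_B}:=\sum_{i_A}p(i_A,i_B|o_A,o_B)\mathcal{A}_{x|i_A}$ (a convex mixture of TP maps, hence TP), while Bob's enlarged instrument is just $\tilde{\mathcal{B}}_{o_A,o_B|i_B,x}:=\delta_{x,o_A}\mathcal{B}_{o_B|i_B}$ with a dummy index $x$ enforcing consistency of the $o_A$ label around the loop. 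You instead split the weight by the chain rule, $p(i_A,i_B|o_A,o_B)=p_A(i_A|o_A,o_B)\,p(i_B|i_A,o_A,o_B)$, giving the conditional factor to Alice and the marginal to Bob, with loop labels $a=(o_A,i_B)$, $b=(o_B,i_A)$; the complementary normalizations $\sum_{i_B}p(i_B|i_A,o_A,o_B)=1$ and $\sum_{i_A}p_A(i_A|o_A,o_B)=1$ are exactly what make both families instruments, as you correctly identify. Your construction keeps the index $i_A$ explicit on the loop (so the protocol reads more operationally: each party forwards the label the other needs), at the cost of the zero-probability caveat in defining the conditional; the paper's version avoids that caveat and needs no chain rule, at the cost of the slightly less intuitive mixing over $i_A$ inside Alice's element and the dummy index $x$. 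Both are complete proofs of the same statement.
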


\begin{proof}
It is easy to see that if $p(i_A,i_B|o_A,o_B)=\delta_{i_A,o_B}\delta_{i_B,o_A}$, LOCC* defined by Eq.~(\ref{eq:LOCC*})  reduces to Eq.~(\ref{eq:LOSC}).
To show the converse, let the local operations of Alice and Bob be
\begin{eqnarray}
\tilde{\mathcal{A}}_{i_B,x|o_A,o_B}&:=&\sum_{i_A} p(i_A,i_B|o_A,o_B)\mathcal{A}_{x|i_A}\\
\tilde{\mathcal{B}}_{o_A,o_B|i_B,x}&:=&\delta_{x,o_A}\mathcal{B}_{o_B|i_B},
\end{eqnarray}
by introducing a new  index $x$. Since a convex combination of TP maps is also a TP map,  $\{\tilde{\mathcal{A}}_{i_B,x|o_A,o_B}\}_{i_B,x}$ and $\{\tilde{\mathcal{B}}_{o_A,o_B|i_B,x}\}_{o_A,o_B}$ are quantum instruments. Introducing new classical indices  $a:=(i_B, x)$ and $b:=(o_A, o_B)$,  any $\mathcal{M}$ in the form of Eq.~(\ref{eq:LOCC*}) has a decomposition in the form of
\begin{equation}
 \mathcal{M}=\sum_{a,b}\tilde{\mathcal{A}}_{a|b}\otimes\tilde{\mathcal{B}}_{b|a},
\end{equation}
which is the form given by Eq.~(\ref{eq:LOSC}).
\end{proof}

From this form of LOCC*,  it is possible to interpret that CC* in LOCC* can be {\it looped}, namely, Alice's classical input is Bob's classical output and Bob's classical input is Alice's classical output. 

\section{LOCC* and SEP}

We show that LOCC* provides an operational interpretation of a set of {\it separable operations} (denoted by SEP) by proving that LOCC* is equivalent to SEP.  We start by investigating inclusion relations between LOCC and LOCC*.   It is easy to verify that LOCC is a subset of LOCC*  from the definition of LOCC*.   We show that LOCC* is strictly larger than LOCC by constructing the following  example based on nine-state discrimination \cite{9state}.

{\bf Example:} Reference \cite{9state} considers the task of distinguishing a set of nine mutually orthogonal product states of two three-level systems shared between Alice and Bob given by
\begin{align}
&\bigr|\psi_{1\left(2\right)}\bigr\rangle_{AB} =\bigr|0\bigr\rangle_{A}\bigr|0\pm1\bigr\rangle_{B},\,\,\,\,\,\,\,\,\,
\bigr|\psi_{3\left(4\right)}\bigr\rangle_{AB} =\bigr|0\pm1\bigr\rangle_{A}\bigr|2\bigr\rangle_{B},\nonumber \\
&\bigr|\psi_{5\left(6\right)}\bigr\rangle_{AB} =\bigr|2\bigr\rangle_{A}\bigr|1\pm2\bigr\rangle_{B},\,\,\,\,\,\,\,\,\,
\bigr|\psi_{7\left(8\right)}\bigr\rangle_{AB} =\bigr|1\pm2\bigr\rangle_{A}\bigr|0\bigr\rangle_{B},\nonumber\\
&\bigr|\psi_{9}\bigr\rangle_{AB} =\bigr|1\bigr\rangle_{A}\bigr|1\bigr\rangle_{B}\label{eq:nine_states},
\end{align}
where $\{ \ket{j} \}_{j=0}^2 $ is the computational basis of a three-level system,  $\ket{a\pm b} :=(\ket{a}\pm\ket{b})/\sqrt{2}$, indices $A$ and $B$ represent Alice's share and Bob's share of the states, respectively.  It has been shown that a state guaranteed to be one of the set of nine states is {\it not} deterministically and perfectly distinguishable from other states by any protocol described by a map in LOCC followed by local measurements \cite{9state}. 
  We show that the the nine-state  {\it can} be deterministically  and perfectly distinguishable by using a map in LOCC* followed by local measurements by presenting constructions of Alice's local operation $\{\mathcal{A}_{a|b} \}_a$  and Bob's local operation $\{ \mathcal{B}_{b|a} \}_b$  in the form of Eq.~(\ref{eq:LOSC}).   The constructions of $\{\mathcal{A}_{a|b} \}_a$ and  $\{ \mathcal{B}_{b|a} \}_b$ in Kraus operator representations are given in Table~\ref{table:9state}.   It is easy to check that for any $\ket{\psi_k }\in \{  \ket{\psi_i} \}_{i=1}^9$, the corresponding $\mathcal{M}$  in LOCC* with the constructions of the local operations transforms $\ket{\psi_k}\bra{\psi_k}$ into
\begin{eqnarray}
\sum_{a,b}\mathcal{A}_{a|b}\otimes\mathcal{B}_{b|a} (\ket{\psi_k}\bra{\psi_k})
= \ket{k}_{A'}\bra{k} \otimes \ket{k}_{B'} \bra{k}
\end{eqnarray}
where indices $A'$ and $B'$ denote nine-dimensional output systems for Alice and Bob.    Once Alice and Bob obtain the output state $\ket{k}_{A'}\bra{k} \otimes \ket{k}_{B'} \bra{k}$, they can determine the classical output $k$ by individually performing projective measurements in the basis given by $\{ \ket{j} \}_{j=1}^{9}$. Note that the nine states can be probabilistically distinguished without error by a stochastic LOCC (SLOCC) protocol, which indicates that LOCC* is also closely related to SLOCC as shown later.

\begin{table}
\begin{center}
\renewcommand
\arraystretch{1.1}
\begin{tabular}{cp{20mm}p{20mm}p{20mm}}\hline\hline
\backslashbox{a}{b} & 1 & 2 & 3 \\ \hline
1 & $\ket{1}_{A'}\bra{0}_A$ & $\ket{2}_{A'}\bra{0}_A$ & $\ket{3}_{A'}\bra{0+1}_A$ \\ 
2 & $\ket{8}_{A'}\bra{1-2}_A$ & $\ket{9}_{A'}\bra{1}_A$ & $\ket{4}_{A'}\bra{0-1}_A$ \\ 
3 & $\ket{7}_{A'}\bra{1+2}_A$ & $\ket{6}_{A'}\bra{2}_A$ & $\ket{5}_{A'}\bra{2}_A$ \\ \hline\hline
\end{tabular}
\vspace{0.5cm}
\\
\begin{tabular}{cp{20mm}p{20mm}p{20mm}}\hline\hline
\backslashbox{a}{b} & 1 & 2 & 3 \\ \hline
1 & $\ket{1}_{B'}\bra{0+1}_B$ & $\ket{2}_{B'}\bra{0-1}_B$ & $\ket{3}_{B'}\bra{2}_B$ \\ 
2 & $\ket{8}_{B'}\bra{0}_B$ & $\ket{9}_{B'}\bra{1}_B$ & $\ket{4}_{B'}\bra{2}_B$ \\
3 & $\ket{7}_{B'}\bra{0}_B$ & $\ket{6}_{B'}\bra{1-2}_B$ & $\ket{5}_{B'}\bra{1+2}_B$ \\ \hline\hline
\end{tabular}
\caption{ Tables of the Kraus operators $ K_{a|b}^{(A)}$ of Alice's local operations $\{ \mathcal{A}_{a|b} \}_a$ in  the upper table and $K_{b|a}^{(B)}$ of Bob's local operation $\{\mathcal{B}_{b|a} \}_b$ in  the lower table.   In the Kraus operator representation, the deterministic joint quantum operation $\mathcal{M}=\sum_{a,b}\mathcal{A}_{a|b}\otimes\mathcal{B}_{b|a}$ transforms any quantum input $\rho_{AB}$ on systems $A$ and $B$ into a quantum output on systems $A'$ and $B'$ as ${\rho'} _{A' B'}=\mathcal{M}(\rho_{AB}) =   \sum_{a,b} (K_{a|b}^{(A)} \otimes K_{b|a}^{(B)}) \rho_{AB} (K_{a|b}^{(A)}  \otimes K_{b|a}^{(B)})^\dagger$ where $\sum_a (K_{a|b}^{(A)})^\dagger K_{a|b}^{(A)} = \mathbb{I}_A$ for any $b$ and $\sum_b (K_{b|a}^{(B)})^\dagger K_{b|a}^{(B)} = \mathbb{I}_B$ for any $a$ with identity operators $\mathbb{I}_A$ and $\mathbb{I}_B$ on system $A$ and $B$, respectively.}
\label{table:9state} 
\end{center}
\end{table}

\begin{figure}
 \centering
  \includegraphics[height=.20\textheight]{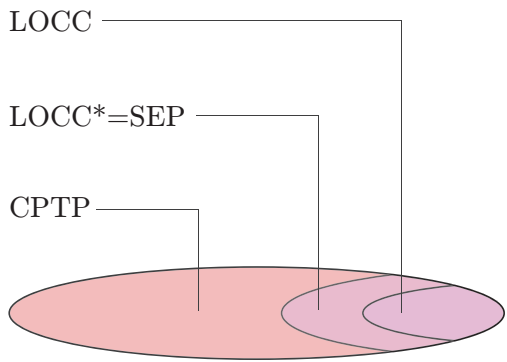}
  \caption{ The inclusion relation between the classes of  deterministic joint quantum operations CPTP, SEP, LOCC*, and LOCC.   LOCC* is equivalent to the set of separable operations (SEP). LOCC is strictly smaller than SEP. LOCC* is strictly smaller than the set of CPTP.}
\label{fig:classes}
\end{figure} 

We can further show that LOCC* is equivalent to SEP as summarized in Fig.~\ref{fig:classes}.  SEP is a set of maps representing deterministic joint quantum operations $\mathcal{M}$ that can be written as
\begin{eqnarray}
 \mathcal{M}=\sum_{k=1}^K\mathcal{E}^A_{k}\otimes\mathcal{E}^B_{k},
\label{eq:SEP}
\end{eqnarray}
where each elements of a quantum operation $\{\mathcal{E}^A_k\otimes\mathcal{E}^B_k\}_k$ is the tensor product of two completely positive maps $\mathcal{E}^A_k$ and $\mathcal{E}^B_k$ performed by Alice and Bob, respectively.
 The set of nine states can be deterministically and perfectly distinguished by using a map in SEP whose Kraus operator representation is given by $\{ \ket{k}_{A'}\otimes \ket{k}_{B'}  \bra{\psi_k}_{AB} \}$ followed by local projective measurements in the basis given by $\{ \ket{k} \}_{k=1}^{9}$.   By definition, a map in SEP cannot transform any separable states into entangled states \cite{VVedral, MBPlenio}.  Therefore, SEP does not have the power to create entanglement between two parties if the quantum input is not entangled.   The class SEP includes the class LOCC \cite{CLMOW12}.   Due to the mathematical simplicity of its structure, the class SEP is often used for proving that a quantum task is not implementable by LOCC protocols by showing that the task is not implementable even by using a stronger class of operations, SEP.  However, its operational meaning has not been clear. 

\begin{theorem}
 LOCC*=SEP
\end{theorem}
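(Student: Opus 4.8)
The plan is to establish the two inclusions $\mathrm{LOCC}^*\subseteq\mathrm{SEP}$ and $\mathrm{SEP}\subseteq\mathrm{LOCC}^*$ separately, the first being routine and the second carrying all the content. For $\mathrm{LOCC}^*\subseteq\mathrm{SEP}$ I would simply invoke Theorem~1: any $\mathcal{M}\in\mathrm{LOCC}^*$ admits the $\infty$-shaped-loop decomposition $\mathcal{M}=\sum_{a,b}\mathcal{A}_{a|b}\otimes\mathcal{B}_{b|a}$ of Eq.~\eqref{eq:LOSC}, and since each $\mathcal{A}_{a|b}$ and each $\mathcal{B}_{b|a}$ is an element of a quantum instrument, hence completely positive, re-indexing the pairs $(a,b)$ by a single label $k$ exhibits $\mathcal{M}$ directly in the form of Eq.~\eqref{eq:SEP}.

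For the converse I would start from a separable $\mathcal{M}=\sum_{k=1}^{K}\mathcal{E}^A_k\otimes\mathcal{E}^B_k$ and must produce a loop decomposition Eq.~\eqref{eq:LOSC} with genuine instruments. First I reduce to rank-one product Kraus operators, writing $\mathcal{E}^A_k(\cdot)=A_k(\cdot)A_k^\dagger$ and $\mathcal{E}^B_k(\cdot)=B_k(\cdot)B_k^\dagger$; splitting a multi-Kraus term only refines the decomposition. With $N^A_k:=A_k^\dagger A_k$ and $N^B_k:=B_k^\dagger B_k$, trace preservation of $\mathcal{M}$ reads $\sum_k N^A_k\otimes N^B_k=\mathbb{I}_A\otimes\mathbb{I}_B$. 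A key preliminary is a rescaling $A_k\mapsto t_kA_k,\ B_k\mapsto t_k^{-1}B_k$, which leaves $\mathcal{E}^A_k\otimes\mathcal{E}^B_k$ invariant: because $N^A_k\otimes N^B_k\leq\sum_j N^A_j\otimes N^B_j=\mathbb{I}_A\otimes\mathbb{I}_B$ forces $\|N^A_k\|\,\|N^B_k\|\leq1$, a suitable $t_k$ makes $N^A_k\leq\mathbb{I}_A$ and $N^B_k\leq\mathbb{I}_B$ hold at once.

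The construction I would then aim for is the one exemplified by Table~\ref{table:9state}: Alice's classical output $a$ and Bob's classical output $b$ each label a piece of the term index, the looped channel of Eq.~\eqref{eq:LOSC} feeds Alice's output to Bob and Bob's output to Alice so that the two parties are forced onto the same term, and the (rescaled, refined) product terms are assigned to the cells $(a,b)$ of a rectangular array so that, in each fixed column $b$, the operators $(K^{(A)}_{a|b})^\dagger K^{(A)}_{a|b}$ resolve $\mathbb{I}_A$, while in each fixed row $a$ the operators $(K^{(B)}_{b|a})^\dagger K^{(B)}_{b|a}$ resolve $\mathbb{I}_B$. Summing all cells reproduces $\mathcal{M}$, and these two resolutions of the identity are exactly the statements that $\{\mathcal{A}_{a|b}\}_a$ is a trace-preserving instrument for every $b$ and that $\{\mathcal{B}_{b|a}\}_b$ is one for every $a$, placing $\mathcal{M}$ in $\mathrm{LOCC}^*$.

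I expect the main obstacle to be precisely this last step: reconciling trace preservation of each party's conditional operation with the requirement that the loop reproduce $\mathcal{M}$ \emph{exactly}. The naive remedy of appending an extra ``completion'' outcome to absorb $\mathbb{I}_A-\sum_a(K^{(A)}_{a|b})^\dagger K^{(A)}_{a|b}$ does not work, because in the loop \emph{every} pair $(a,b)$ contributes $\mathcal{A}_{a|b}\otimes\mathcal{B}_{b|a}$, so the completion outcomes of Alice and of Bob inevitably meet and generate spurious terms (for instance an unwanted $\mathrm{id}_A\otimes\mathrm{id}_B$) that destroy both trace preservation and equality with $\mathcal{M}$. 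Hence the identity must be resolved \emph{internally}, row by row and column by column, by the refined product terms of $\mathcal{M}$ themselves rather than by added garbage. The real work, which I would carry out using the rescaling above (guaranteeing $N^A_k\leq\mathbb{I}_A$, $N^B_k\leq\mathbb{I}_B$ so that such groupings can exist), is therefore to show that every separable $\mathcal{M}$ admits a refinement of its Kraus decomposition whose normalization operators can be arranged into the required simultaneous row and column resolutions of the identity.
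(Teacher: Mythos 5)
Your first inclusion ($\mathrm{LOCC}^*\subseteq\mathrm{SEP}$) is correct and is exactly the paper's argument, and your diagnosis of where the difficulty of the converse lies is also correct. But the proof of $\mathrm{SEP}\subseteq\mathrm{LOCC}^*$ has a genuine gap, and it stems from a wrong turn at precisely the point you flag. You dismiss the ``completion outcome'' strategy on the grounds that Alice's and Bob's completion outcomes must meet in the loop and generate spurious terms. They do meet, but this is harmless: one can arrange for one factor of \emph{every} spurious pairing to be the zero map, which is a perfectly legitimate instrument element. Concretely (this is the paper's construction), after normalizing each $\mathcal{E}^A_k,\mathcal{E}^B_k$ to be trace non-increasing --- your rescaling accomplishes this --- set $\mathcal{A}_{k|k}=\mathcal{E}^A_k$ and $\mathcal{B}_{k|k}=\mathcal{E}^B_k$ for $1\le k\le K$, put the completions $\mathcal{A}_{K+1|k}=\overline{\mathcal{E}^A_k}$ and $\mathcal{B}_{K+1|k}=\overline{\mathcal{E}^B_k}$ (where $\mathcal{E}+\overline{\mathcal{E}}$ is CPTP), declare all remaining elements zero except on two extra labels $K+1,K+2$, where Alice's nonzero elements sit on the diagonal, $\mathcal{A}_{K+1|K+1}=\mathcal{A}_{K+2|K+2}=\mathcal{M}^A$, and Bob's on the anti-diagonal, $\mathcal{B}_{K+1|K+2}=\mathcal{B}_{K+2|K+1}=\mathcal{M}^B$, for arbitrary CPTP maps $\mathcal{M}^A,\mathcal{M}^B$. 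Then $\sum_a\mathcal{A}_{a|b}$ is CPTP for every $b$ and $\sum_b\mathcal{B}_{b|a}$ is CPTP for every $a$, yet in the loop $\sum_{a,b}\mathcal{A}_{a|b}\otimes\mathcal{B}_{b|a}$ each completion element is paired with a zero element of the other party (e.g.\ $\mathcal{A}_{K+1|k}$ meets $\mathcal{B}_{k|K+1}=0$, and $\mathcal{B}_{K+2|K+1}$ meets $\mathcal{A}_{K+2|K+1}=0$), so only $\sum_k\mathcal{E}^A_k\otimes\mathcal{E}^B_k=\mathcal{M}$ survives.

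By contrast, the route you substitute --- refining the Kraus decomposition of $\mathcal{M}$ until its normalization operators can be arranged into a rectangular array whose columns resolve $\mathbb{I}_A$ and whose rows resolve $\mathbb{I}_B$ \emph{internally}, with no padding --- is itself the entire content of the theorem and is left unproven. The bound $\|N^A_k\|\,\|N^B_k\|\le 1$ is a sensible normalization but supplies no mechanism for producing such simultaneous row and column resolutions of the identity, and Table~\ref{table:9state} is a hand-crafted instance rather than evidence of a general procedure. As written, the proposal is therefore incomplete; the missing idea is that zero maps may be used as instrument elements so that the two parties' completion outcomes, arranged over two extra classical labels, never both contribute to the same cell $(a,b)$ of the loop.
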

\begin{proof}
 It is easy to see that LOCC* is a subset of SEP from the form of Eq.~\eqref{eq:LOSC},  since both $\mathcal{A}_{a|b}$ and $\mathcal{B}_{b|a}$ are completely positive, so we can always transform a map in LOCC* into the form of \eqref{eq:SEP}. 
 To prove that SEP is a subset of LOCC*, we first define a completely positive and trace decreasing (CPTD) map. A linear map $\mathcal{M}$ is CPTD if and only if $\mathcal{M}$ is CP and
 \begin{equation}
 {\rm tr}\left[\mathcal{M}(\rho)\right]\leq {\rm tr}\left[\rho\right]=1
\end{equation}
for all quantum state $\rho$. In Eq.~\eqref{eq:SEP}, we can assume that each $\mathcal{E}^A_k$ and $\mathcal{E}^B_k$ is a CPTD map in general. For any CPTD map $\mathcal{M}$, we can find a complementary CPTD map $\overline{\mathcal{M}}$ such that $\mathcal{M}+\overline{\mathcal{M}}$ is a CPTP map. By using these facts, we present the construction of local operators for LOCC*.
\begin{eqnarray}
\mathcal{A}_{k|k}&=&\mathcal{E}_k^A\,\,\,\,\,\,\mathrm{for}\,\,1\leq k\leq K\\
\mathcal{B}_{k|k}&=&\mathcal{E}_k^B\,\,\,\,\,\,\mathrm{for}\,\,1\leq k\leq K\\
\mathcal{A}_{K+1|k}&=&\overline{\mathcal{E}_k^A}\,\,\,\,\,\,\mathrm{for}\,\,1\leq k\leq K\\
\mathcal{B}_{K+1|k}&=&\overline{\mathcal{E}_k^B}\,\,\,\,\,\,\mathrm{for}\,\,1\leq k\leq K\\
\mathcal{A}_{K+1|K+1}=\mathcal{A}_{K+2|K+2}&=&\mathcal{M}^A\\
\mathcal{B}_{K+1|K+2}=\mathcal{B}_{K+2|K+1}&=&\mathcal{M}^B\\
\mathcal{A}_{a|b}&=&0\,\,\,\,\,\,\,\mathrm{else},\\
\mathcal{B}_{b|a}&=&0\,\,\,\,\,\,\, \mathrm{else},
\end{eqnarray}
where $\mathcal{M}^A$ and $\mathcal{M}^B$ are arbitrary CPTP maps performed by Alice and Bob, respectively.
When the summations of the local operators are taken to form an element of LOCC* as in Eq.~\eqref{eq:LOSC}, the indices $a$ and $b$ run from $1$ to $K+2$.

\end{proof}
Note that the LOCC* map obtained in the proof is different from the simpler LOCC* map given in Table~\ref{table:9state} for nine-state discrimination. 

Since any deterministic joint quantum operations can be implemented by entanglement assisted LOCC, LOCC* can be implemented by entanglement assisted (normal) classical communication respecting causal order.  Combining the results shown in this part, we see that an operational interpretation of SEP in terms of CC* is given.  This interpretation suggests that for implementing maps in SEP but not in LOCC, entanglement assisted classical communication simulates a special class of CC* not respecting causal order, namely, {\it ``classical communication'' without causal order}.

\section{LOCC* and SLOCC}
In this part, we  present the relationship between LOCC* and SLOCC as the third result.  SLOCC is a set of (linear) CP maps consisting of local operations and normal classical communication.   In contrast to LOCC, SLOCC contains CP maps representing cases where particular measurement outcomes are post-selected.   Since we use a linear map to represent SLOCC, a SLOCC element is not always TP but can be trace decreasing (TD) in general.  We define a class of linear CP maps called SLOCC* that can be simulated by SLOCC.
That is, SLOCC* is a set of linear CP maps satisfying $\mathcal{M}=c\Gamma$, where $\Gamma$ is an element of SLOCC and a non-negative constant $c \geq 0$.
Note that SLOCC* contains not only TD maps but also trace increasing maps.  It is easy to see that LOCC* (or SEP) is a subset of SLOCC* since any element of LOCC* (or SEP) is implementable by SLOCC with a constant  success probability independent of inputs.  

 In the following, we show that a superset of LOCC* where the TP condition is removed from LOCC* is equivalent to SLOCC*.  The formal statement is represented as the following theorem.  
 
 \begin{theorem}
 SLOCC* is equivalent to the set of linear CP maps that can be decomposed into the form of Eq.~\eqref{eq:LOCC*}.
\end{theorem}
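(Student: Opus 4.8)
The plan is to prove the two inclusions of Theorem~3 separately, after first noting a bridging fact: the index‑relabelling carried out in the proof of the looped‑form theorem (Theorem~1) turns any map of the form of Eq.~\eqref{eq:LOCC*} into one of the form of Eq.~\eqref{eq:LOSC} and vice versa, and that argument used only that $p(i_A,i_B|o_A,o_B)$ is a conditional probability distribution and that the families $\{\mathcal{A}_{o_A|i_A}\}$, $\{\mathcal{B}_{o_B|i_B}\}$ are instruments. It never invoked trace preservation of $\mathcal{M}$. Hence, as \emph{sets of linear CP maps}, the form of Eq.~\eqref{eq:LOCC*} and the looped form of Eq.~\eqref{eq:LOSC} coincide even with the TP condition removed, and I may work with whichever is convenient.

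For the inclusion of the looped form into SLOCC*, I would exhibit an honest one‑way SLOCC protocol whose post‑selected branch reproduces a scalar multiple of $\mathcal{M}=\sum_{a,b}\mathcal{A}_{a|b}\otimes\mathcal{B}_{b|a}$. Alice draws a guess $\hat{b}$ uniformly from the $n_B$ values of $b$, applies the instrument $\{\mathcal{A}_{a|\hat{b}}\}_a$, and forwards $(a,\hat{b})$ to Bob, who applies $\{\mathcal{B}_{b|a}\}_b$; the run is accepted iff $b=\hat{b}$. Averaging over the guess and summing the accepted outcomes yields the branch map $\Gamma=\frac{1}{n_B}\sum_{a,b}\mathcal{A}_{a|b}\otimes\mathcal{B}_{b|a}=\frac{1}{n_B}\mathcal{M}$. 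Since $\Gamma$ is literally the accept event of a genuine LOCC protocol respecting causal order, it is a valid SLOCC element (in particular CP and trace non‑increasing), so $\mathcal{M}=n_B\,\Gamma$ lies in SLOCC*.

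For the converse I would start from $\mathcal{M}=c\Gamma$ with $\Gamma\in\mathrm{SLOCC}$ and $c\ge 0$. Because any SLOCC element arises from local operations and classical communication respecting causal order, conditioning on the post‑selected outcomes leaves a sum of products of local CP maps, so $\Gamma$, and hence $\mathcal{M}$, is a separable CP map $\mathcal{M}=\sum_k\mathcal{E}^A_k\otimes\mathcal{E}^B_k$. The crucial preparatory step is to re‑expand this separable form so that every local factor is completely positive and trace decreasing (CPTD): for each product term I balance the two factors by a scalar and then split the term into enough identical copies that both rescaled factors have maximal ratio ${\rm tr}[\mathcal{E}(\rho)]/{\rm tr}[\rho]$ at most one. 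This is always achievable, even when $\mathcal{M}$ increases the trace. With all factors CPTD I can apply verbatim the construction of the proof of LOCC*$=$SEP (Theorem~2), setting $\mathcal{A}_{k|k}=\mathcal{E}^A_k$, $\mathcal{A}_{K+1|k}=\overline{\mathcal{E}_k^A}$, and so on, and filling the remaining entries with complementary and arbitrary CPTP maps so that Alice's and Bob's families are genuine instruments while the looped sum in Eq.~\eqref{eq:LOSC} collapses to $\mathcal{M}$. I would stress that this construction used only the CPTD property of the factors and the existence of complementary CPTD maps, never the trace preservation of $\mathcal{M}$, so it transfers unchanged and places $\mathcal{M}$ in the form of Eq.~\eqref{eq:LOSC}, equivalently Eq.~\eqref{eq:LOCC*}.

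I expect the main obstacle to be exactly the trace‑increasing maps that SLOCC* explicitly contains. The looped form of Eq.~\eqref{eq:LOSC} caps the output trace at $\min(n_A,n_B)$ times the input trace, so a map amplifying the trace by a factor $c$ can be represented only after the index sets have been enlarged. The rescale‑and‑split step above is precisely what supplies these extra indices, and verifying that the duplicated CPTD decomposition still feeds correctly into the instrument (TP) conditions of the Theorem~2 construction is the part that requires the most care.
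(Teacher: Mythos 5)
Your proposal is correct and follows essentially the same route as the paper: identify SLOCC* with the set of separable linear CP maps, rescale and split each product term into enough identical copies that all local factors become CPTD, and then reuse the complementary-map construction from the proof of Theorem~2 to reach the looped form of Eq.~\eqref{eq:LOSC} and hence Eq.~\eqref{eq:LOCC*}. The only difference is that you make explicit (via the uniform-guess one-way protocol and the observation that Theorem~1's relabelling never uses trace preservation) two steps the paper compresses into ``by definition,'' which is a matter of exposition rather than of method.
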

\begin{proof}
 By definition, SLOCC* is equivalent to the set of linear CP maps that can be decomposed into the form of Eq.~\eqref{eq:SEP}. Without loss of generality, we can restrict $\mathcal{E}_k^A$ and $\mathcal{E}_k^B$ in Eq.~\eqref{eq:SEP} to CPTD maps since for all linear CP maps $\mathcal{E}^A_k$, there is a natural number $M$ such that  $\frac{1}{M}\mathcal{E}^A_k$ is a CPTD map and 
$\mathcal{M}$ given by Eq.~\eqref{eq:SEP} can be represented by $\sum_{i=1}^M\sum_{k}\tilde{\mathcal{E}^A_{i,k}}\otimes\tilde{\mathcal{E}^B_{i,k}}$, where $\tilde{\mathcal{E}^A_{i,k}}=\frac{1}{M}\mathcal{E}^A_k$ and $\tilde{\mathcal{E}^B_{i,k}}=\mathcal{E}^B_{k}$. By employing the same technique used in the proof of Theorem 2, any element of SLOCC* can be decomposed into  the form of Eq.~\eqref{eq:LOSC}. Thus, any SLOCC* element can be decomposed into Eq.~\eqref{eq:LOCC*}.
\end{proof}
We summarize the inclusion relation of sets of linear CP maps as shown in Fig.~\ref{fig:classes2}.

\begin{figure}
 \centering
  \includegraphics[height=.16\textheight]{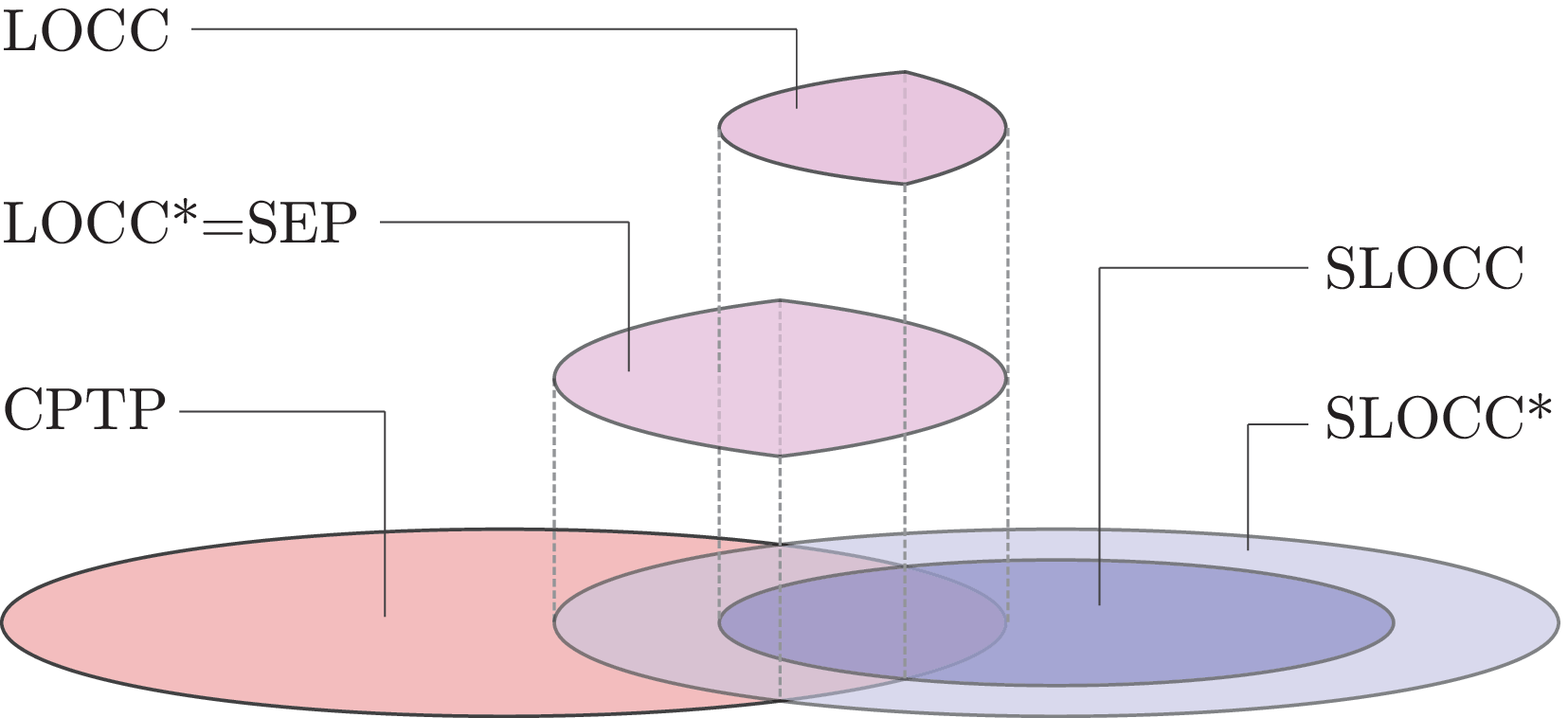}
  \caption{The inclusion relation between the classes of linear CP maps. LOCC* is equivalent to the set of separable maps (SEP). LOCC is strictly smaller than SEP. SEP is strictly smaller than the set of CPTP. The intersection of SLOCC and the set of CPTP maps (TP SLOCC elements) is LOCC since any TP SLOCC element can be implemented by a LOCC protocol. The intersection of SLOCC* and the set of CPTP maps (TP SLOCC* elements) is LOCC*. Note that SEP is not included by SLOCC in this inclusion relation while every map in SEP can be simulated by SLOCC. A SLOCC map whose success probability is strictly less than 1 is defined by a linear CPTD map and thus not included in the set of CPTP map.  The success probability of simulating maps in SEP but not in LOCC by SLOCC is strictly less than 1.}
\label{fig:classes2}
\end{figure} 

Any linear CP maps that can be decomposed into the form of Eq.~\eqref{eq:LOCC*} are simulatable by LOCC with post-selection (SLOCC), and vice versa. A connection between post-selection and the causal structure of the spacetime has been discussed in the context of the probabilistic closed time-like curve (p-CTC) in  \cite{pCTC1, pCTC2, pCTC3}.   Our result indicates the connection between post-selection and CC* without causal order.    In standard spacetime,  a map in SEP but not in LOCC can be implemented by LOCC assisted by entanglement.  Thus we can recognize that entanglement accompanied with classical communication is used only for enhancing the success probability of a task achievable without entanglement $p < 1$ to $p=1$ to implement a map in SEP (=LOCC*,  but not in LOCC).  In this sense,  CC* without causal order can be understood as an alternative characterization of the power of entanglement to enhance success probability by effectively changing the partial ordering properties of the spacetime.

\section{LOCC* and classical process matrix}
We discuss the correspondences between our formalism of LOCC* and other formalisms of quantum operations without the assumptions of  the partial order of local operations, namely {\it higher order formalisms} developed by \cite{OFC, Chiribella2}.   In  the higher order formalisms, the effects of quantum communication linking the local operations of two parties are described as a map that  transforms local operations into a deterministic joint quantum operation.   The map, represented by  a {\it process matrix},  is a higher order map (supermap) transforming a quantum operation to another quantum operation, whereas a (normal) map  transforms a quantum state to another quantum state.   Reference \cite{OFC} derives the requirements for a process matrix to be consistent with quantum mechanics but without a predefined causal order of local operations.  They have shown that there are process matrices that are not implementable by {\it quantum} communication linking partially ordered local operations.  These requirements for a process matrix to be consistent with quantum mechanics can be interpreted as a new kind of causality, which is different from the special relativistic causality but based only on quantum mechanics.  The process matrix shows a new possibility for increasing the speed of quantum computers \cite{Qswitchspeed}, and implementations of the process matrix are discussed for several settings \cite{Qswitch, Nakago}.

A crucial difference between the higher order formalisms and our formalism of LOCC* is that the local operations are linked by quantum communication in the higher order formalisms,  but we only allow classical communication between the parties.   To compare the two formalisms,  we consider a special type of process matrix, called {\it classical} process matrix, where quantum communication between the parties is restricted to transmitting a probabilistic mixture of ``classical'' states, namely, a set of fixed mutually orthogonal states.  In \cite{OFC},  it is shown that a deterministic joint operation described by local operations linked by a classical process matrix does not exhibit the new causality exhibited by a (fully quantum) process matrix. In Appendix B, we show that the deterministic joint operations in this case reduce to a probabilistic mixture of two types of operations in one-way LOCC from Alice to Bob and from Bob to Alice.   We denote a set of such deterministic joint quantum operations as LOCC** for comparison with LOCC*.

Since LOCC** is a set of probabilistic mixtures of one-way LOCC, LOCC* is a larger set than LOCC**.  Therefore, CC* used in implementing non-LOCC quantum operations cannot be represented by a classical process matrix linking two local operations.   The gap between LOCC* and LOCC** originates from the conditions imposed on local operations to restrict CC* and a classical process matrix.  CC* in LOCC* is only required to guarantee the joint quantum operation to be deterministic for {\it some} choices of local operations, whereas a classical process matrix in LOCC** is required to guarantee that the joint quantum operation is deterministic for {\it arbitrary} choices of  local operations.  Hence CC* is less restricted than a classical process matrix.   Considering that CC* is simulated by entanglement assisted classical communication, we can conclude that entanglement provides the power to waive restrictions on classical communication linking local operations originating from both the causality in special relativity and the restriction for classical process matrices when it is accompanied by classical communication.

 \section{Conclusion}
We have developed a framework to describe deterministic joint quantum operations of two parties where the parties do not share entanglement and perform local operations described by normal quantum mechanics and linked by a conditional probability distribution which satisfies the TP property of the joint quantum operation, called “classical communication” without predefined causal order (CC*).
By using the framework, we have first shown that LOCC* with CC* respecting causal order is equivalent to LOCC.
Second, we have shown that LOCC* is equivalent to SEP.
Third, we have shown that non-deterministic LOCC* is equivalent to a set of joint quantum operations that can be simulated by SLOCC (SLOCC*).
We have also shown that LOCC* contains a set of deterministic joint quantum operations where local operations are linked by a classical process matrix (LOCC**).
We remark that the same relation holds for a multipartite setting.
Our framework gives a new interpretation of SEP and would be a new toolbox for analyzing the gap between SEP and LOCC.

\vspace{1cm}

\begin{acknowledgments}
This work is supported by the Project for Developing Innovation Systems of MEXT, Japan and JSPS by KAKENHI (Grant No.~23540463, 2633006,15H01677, 16H01050, 17H01694).  We also gratefully acknowledge the ELC project (Grant-in-Aid for Scientific Research on Innovative Areas MEXT KAKENHI (Grant No.~24106009)) for encouraging this research.
\end{acknowledgments}

\appendix

\renewcommand{\theequation}{A\arabic{equation}}
 \section{LOCC and CC* respecting causal order}

In Appendix A, we analyze a set of joint quantum operations represented by Eq.\eqref{eq:LOCC*0} with CC* respecting the causal order. We assume that the spacetime coordinates of the local operations are {\it partially} ordered and ones within each parties are {\it totally} ordered. We consider $p(i_1,\cdots,i'_N|o_1,\cdots,o'_N)$ consistent with some partial order of local operations, which is called CC* respecting causal order. CC* respecting causal order contains noisy classical channels (in the presence of memory) connecting inputs and outputs of several local operations in general, in addition to perfect classical channels connecting classical input and output of two local operations considered in LOCC.

We show that the set of joint quantum operations with CC* respecting the causal order is equivalent to LOCC. Equivalently, we show that a deterministic joint quantum operation is in LOCC if and only if it has a decomposition in Eq.~(4), there is a partial order over local quantum operations, and $p\left(i_1,\cdots , i'_N|o_1, \cdots, o'_N \right)$ in Eq.~(\ref{eq:LOCC*0}) satisfies the no-signaling condition with respect to the partial order. 
This property of deterministic joint quantum operations in the form of Eq.~(\ref{eq:LOCC*0}) suggests that LOCC is a set of all possible deterministic joint quantum operations implementable by local operations and {\it classical communication respecting causal order}. 

In the next subsection, we give a formal definition of CC* respecting the causal order in terms of the no-signaling condition. Then, in the last subsection, we give a proof for our proposition.

\subsection{CC* respecting causal order}

Suppose there is a  strict partial order ``$\prec$'' on the set 
of  local operations $\{\mathcal{A}^{(k)}_{o_k|i_k}\}_{k=1}^N \cup 
\{\mathcal{B}^{(k)}_{o_k|i_k}\}_{k=1}^N$, where  ``strict'' means that 
neither $\mathcal{A}^{(k)}_{o_k|i_k}\prec 
\mathcal{A}^{(k)}_{o_k|i_k}$ nor $\mathcal{B}^{(k)}_{o_k|i_k} \prec 
\mathcal{B}^{(k)}_{o_k|i_k}$
holds. We assume that the order of local operations in each laboratory is fixed and the local operations are performed in the increasing order of $k$, which is the assumption of local temporal ordering introduced in the main text.

We consider that this partial order $\prec$ represents the causal order of the spacetime coordinates of local operations.  For example, $\mathcal{A}^{(k)}_{o_k|i_k} \prec \mathcal{B}^{(l)}_{o'_l|i'_l}$ 
means that $\mathcal{B}^{(l)}_{o'_l|i'_l}$ is performed after $\mathcal{A}^{(k)}_{o_k|i_k}$ had been performed.  Neither $\mathcal{A}^{(k)}_{o_k|i_k} \prec \mathcal{B}^{(l)}_{o'_l|i'_l}$ nor $ \mathcal{B}^{(l)}_{o'_l|i'_l}  \prec \mathcal{A}^{(k)}_{o_k|i_k} $ holds if the spacetime coordinates of  $\mathcal{A}^{(k)}_{o_k|i_k}$ and $\mathcal{B}^{(l)}_{o'_l|i'_l}$ are spacelike separated.  Due to the assumption of local temporal ordering, $\mathcal{A}^{(k+1)}_{o_{k+1}|i_{k+1}}$ is performed after $\mathcal{A}^{(k)}_{o_{k}|i_{k}}$, thus ``$\prec$''  satisfies  $\mathcal{A}^{(k)}_{o_{k}|i_{k}} \prec \mathcal{A}^{(k+1)}_{o_{k+1}|i_{k+1}}$ for all $k=1, \cdots ,N-1$. Similarly, it satisfies $\mathcal{B}^{(k)}_{o'_{k}|i'_{k}} \prec \mathcal{B}^{(k+1)}_{o'_{k+1}|i'_{k+1}}$ for all $k=1, \cdots ,N-1$. 

We define a set ${past}(\mathfrak{I})$ representing a set of all ``{\it past}'' outputs for a set of inputs $\mathfrak{I}$, where  $\mathfrak{I}$ is a subset of  $\{i_k\}_{k=1}^N \cup \{i'_k\}_{k=1}^N$. Formally, by introducing a set of local  operations $Op\left(\mathfrak{I}\right)$ corresponding to $\mathfrak{I}$ as 
\begin{equation}
 Op\left(\mathfrak{I}\right):=\left\{ \mathcal{A}^{(k)}_{o_{k}|i_{k}} 
| i_k \in \mathfrak{I}\right\} \cup \left\{ \mathcal{B}^{(k)}_{o'_{k}|i'_{k}} 
| i'_k \in \mathfrak{I}\right\},
\end{equation}
${past}(\mathfrak{I})$ 
is defined as 
\begin{eqnarray}
 past\left(\mathfrak{I}\right):=&\{o_k | \exists \chi \in 
 Op\left(\mathfrak{I}\right), \mathcal{A}^{(k)}_{o_{k}|i_{k}} \prec 
 \chi \} \nonumber\\&\cup \{o'_k | \exists \chi \in 
 Op\left(\mathfrak{I}\right), \mathcal{B}^{(k)}_{o'_{k}|i'_{k}} \prec \chi \},
\end{eqnarray}
where $\chi$ represents any element of $Op\left(\mathfrak{I}\right)$, that is, there is $l$ such that $\chi=\mathcal{A}^{(l)}_{o_{l}|i_{l}}$ or $\chi=\mathcal{B}^{(l)}_{o'_{l}|i'_{l}}$. 
Note that  $o_k$ is not in $past \left(\{i_k\}\right)$ by definition.

Next we define CC* respecting the causal order. The conditional probability distribution $p\left(i_1, \cdots,  i'_N|o_1, \cdots, o'_N\right)$ in Eq.~(\ref{eq:LOCC*0}) is said to be CC* respecting causal order if there is a partial order $\prec$ on the set of local operations $\{\mathcal{A}^{(k)}_{o_k|i_k}\}_{k=1}^N \cup \{\mathcal{B}^{(k)}_{o_k|i_k}\}_{k=1}^N$ satisfying the following conditions:
\begin{itemize}
 \item For all $k$, {  $\mathcal{A}^{(k)}_{o_{k}|i_{k}} \prec 
 \mathcal{A}^{(k+1)}_{o_{k+1}|i_{k+1}}$ and  $\mathcal{B}^{(k)}_{o'_{k}|i'_{k}} \prec \mathcal{B}^{(k+1)}_{o'_{k+1}|i'_{k+1}}$}.
\item For all $k$ and $l$, 
\begin{eqnarray}\label{App Eq p temporal ordering}
 p\left(i_1, \cdots, i_k, i'_1, \cdots i'_l|o_1, 
\cdots, o_N, o'_1, \cdots, o'_N\right)\nonumber\\= p \left(i_1, \cdots, i_k, i'_1, 
\cdots i'_l| past \left ( \{ i_a \}_{ a=1 }^k \cup \{ i'_b \}_{ b=1 }^l  \right ) \right),\nonumber\\
\end{eqnarray}
where
\begin{eqnarray}\label{App Eq definition of p}
 p\left(i_1, \cdots, i_k, i'_1, \cdots i'_l|o_1, 
\cdots, o_N, o'_1, \cdots, o'_N\right)\nonumber\\
= \sum _{\substack{i_{k+1},\cdots i_N\\i'_{l+1} 
\cdots i'_N}}
p\left(i_1, \cdots, i_N, i'_1, \cdots i'_N|o_1,
\cdots, o_N, o'_1, \cdots, o'_N\right).\nonumber\\
\end{eqnarray}

\end{itemize} 
Hence the conditional probability distribution in Eq.~(\ref{App Eq p temporal ordering}) can be regarded as a classical channel satisfying the no-signaling condition among multiple spacetime coordinates, namely, outputs of the classical channel never depend on inputs that are not in the past of the outputs. Note that an input $i_k$ for a local operation $\mathcal{A}^{(k)}_{o_k|i_k}$ is an output of the classical channel. 


\subsection{A necessary and sufficient condition for LOCC* to be LOCC}

We present the main proposition regarding the relationship between LOCC and causal order.
\begin{proposition}
A deterministic joint quantum operation is in LOCC, if and only if it has a decomposition in 
the form of Eq.~(\ref{eq:LOCC*0}) with $p\left(i_1, \cdots,  i'_N|o_1, \cdots, o'_N\right)$ respecting causal order.
\end{proposition}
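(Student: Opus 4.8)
The plan is to prove the two inclusions separately, treating the ``only if'' direction as routine and reserving the work for the converse. For ``only if'' I would begin from the explicit LOCC decomposition of Eq.~(\ref{eq:LOCC2}). There the conditional distribution is a product of Kronecker deltas that identifies each classical output with the single classical input it is wired to, namely $i'_k=o_k$ and $i_{k+1}=o'_k$ along the alternating chain $\mathcal{A}^{(1)}\prec\mathcal{B}^{(1)}\prec\mathcal{A}^{(2)}\prec\mathcal{B}^{(2)}\prec\cdots$. This chain is a total order, hence a partial order, it respects the local temporal ordering, and each input depends on exactly one output lying in its $\prec$-past. Marginalizing out any up-closed set of inputs then leaves a distribution of a down-closed set that depends only on past outputs, so the no-signaling identity Eq.~(\ref{App Eq p temporal ordering}) holds and the delta channel is CC* respecting causal order.

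For the ``if'' direction I would show that any CC* respecting a causal order can be re-implemented by forward classical communication alone. First I would fix a linear extension $\prec'$ of the given partial order $\prec$; such an extension exists for any finite poset, and because it refines the within-party total orders it interleaves the two parties' operations into a single sequence $\chi_1\prec'\chi_2\prec'\cdots\prec'\chi_{2N}$. Runs of consecutive same-party operations are harmless: a party may always act trivially on its turn, so the sequence can be padded into the strictly alternating form of Eq.~(\ref{eq:LOCC2}) without altering $\mathcal{M}$. The key structural fact is that every prefix $S_m=\{\chi_1,\dots,\chi_m\}$ is $\prec$-down-closed, so its set of inputs is a pair of within-party prefixes and the hypothesis Eq.~(\ref{App Eq p temporal ordering}) applies to it.

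Next I would convert these marginal no-signaling constraints into a causal factorization. Writing the chain-rule expansion of $p(\vec{i}\mid\vec{o})$ in the order $\prec'$ and comparing the identities of Eq.~(\ref{App Eq p temporal ordering}) for $S_m$ and $S_{m-1}$ should show that the factor $p\left(i_{\chi_m}\mid i_{\chi_1},\dots,i_{\chi_{m-1}},\vec{o}\right)$ depends on $\vec{o}$ only through the outputs in $past(\chi_m)$, giving
$$p(\vec{i}\mid\vec{o})=\prod_{m=1}^{2N}q_m\left(i_{\chi_m}\mid\{o_\chi:\chi\prec\chi_m\}\right).$$
Each factor $q_m$ can then be realized locally: by the time $\chi_m$ is performed, every operation in its $\prec$-past has already acted earlier in $\prec'$, so the acting party can be handed all those classical outcomes through accumulated forward communication and can sample $i_{\chi_m}$ from $q_m$ with private randomness. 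Folding this sampling into the local instruments and averaging over the private randomness produces a protocol of the form Eq.~(\ref{eq:LOCC2}), i.e.\ an LOCC map, which gives $\mathcal{M}\in\mathrm{LOCC}$.

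I expect the causal-factorization step to be the main obstacle. The hypothesis supplies only the marginal no-signaling constraints, whereas the construction needs a sequential product of conditionals, so one must check carefully that forming successive ratios never reintroduces dependence on outputs outside $past(\chi_m)$ (in particular that $o_{\chi_m}\notin past(S_m)$, consistent with the remark after the definition of $past$). This is the classical counterpart of the fact that a no-signaling channel compatible with a causal order decomposes into a causally ordered sequence of operations, and carrying it out cleanly for the interleaved two-party poset is where the care lies.
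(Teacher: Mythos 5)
Your proposal follows essentially the same route as the paper's proof: the ``only if'' direction reads off the delta-function channel from Eq.~(\ref{eq:LOCC2}), and the ``if'' direction fixes a linear extension of the partial order (the paper's bijection $f$), factorizes $p(\vec i\,|\,\vec o)$ into a product of causally ordered conditionals via successive ratios of the no-signaling marginals (the paper's $Q(\mathbb{J}_k|\mathbb{O}_{k-1})$ in Step 1), absorbs each conditional into the acting party's instrument (the paper's $\mathcal{C'},\mathcal{D'}$ in Step 2), and restores the alternating form by handling consecutive same-party rounds. The only slip is in your displayed factorization, where each factor must retain the conditioning on the previously generated inputs $i_{\chi_1},\dots,i_{\chi_{m-1}}$ (as the paper does by carrying the full history in $\mathbb{O}_{k-1}$), since the no-signaling hypothesis only removes dependence on \emph{outputs} outside $past(\chi_m)$, not on prior inputs; your surrounding prose and the ``accumulated forward communication'' realization are consistent with the corrected form.
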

From the proposition, we immediately derive the following corollary about LOCC*.
\begin{corollary}
 A deterministic joint quantum operation is in LOCC* and not in LOCC, if and only if 
 it has a decomposition in the form of Eq.~(\ref{eq:LOCC*0}) with $p\left(i_1, \cdots,  i'_N|o_1, \cdots, o'_N\right)$ and for all such decompositions,  
$p\left(i_1, \cdots,  i'_N|o_1, \cdots, o'_N\right)$ does not respect causal order.
\end{corollary}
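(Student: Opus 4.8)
The plan is to obtain the corollary as a direct logical consequence of Proposition 1 by negating an equivalence and tracking the quantifiers carefully, so that no new mathematical work is required beyond Proposition 1 and the definition of LOCC*. First I would recall that, by the definition given in the main text, the statement ``$\mathcal{M}$ has a decomposition in the form of Eq.~(\ref{eq:LOCC*0})'' (with the resulting map being CPTP) is precisely the statement that $\mathcal{M}\in\mathrm{LOCC*}$. Thus the right-hand side of the corollary splits into two conjuncts: (i) $\mathcal{M}\in\mathrm{LOCC*}$, and (ii) every decomposition of $\mathcal{M}$ in the form of Eq.~(\ref{eq:LOCC*0}) has an associated $p$ that does not respect causal order.

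Next I would rewrite conjunct (ii) as the negation of an existential statement. Proposition 1 asserts that $\mathcal{M}\in\mathrm{LOCC}$ if and only if there \emph{exists} a decomposition of $\mathcal{M}$ in the form of Eq.~(\ref{eq:LOCC*0}) with $p$ respecting causal order. An equivalence is preserved under negation of both sides, so this yields: $\mathcal{M}\notin\mathrm{LOCC}$ if and only if no such decomposition exists, i.e.\ for \emph{all} decompositions in the form of Eq.~(\ref{eq:LOCC*0}) the associated $p$ fails to respect causal order. This is exactly conjunct (ii). Conjoining with (i), the right-hand side of the corollary reads ``$\mathcal{M}\in\mathrm{LOCC*}$ and $\mathcal{M}\notin\mathrm{LOCC}$,'' which is the left-hand side, establishing the claimed equivalence.

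The one point requiring care---and the closest thing to an obstacle---is the interplay of the universal quantifier in (ii) with the degenerate case in which $\mathcal{M}$ admits no decomposition in the form of Eq.~(\ref{eq:LOCC*0}) at all, i.e.\ $\mathcal{M}\notin\mathrm{LOCC*}$. There the universal clause is vacuously satisfied, but conjunct (i) is false, so the whole right-hand side is false; simultaneously, since $\mathrm{LOCC}\subseteq\mathrm{LOCC*}$ (noted in the main text), $\mathcal{M}\notin\mathrm{LOCC*}$ forces $\mathcal{M}\notin\mathrm{LOCC}$, so the left-hand side is false as well. Both sides therefore agree on this degenerate case, and retaining conjunct (i) explicitly in the statement is exactly what blocks the vacuous reading from producing a mismatch. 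I would accordingly keep (i) as an active hypothesis throughout, applying the negated form of Proposition 1 only to maps that genuinely possess at least one decomposition, so that the quantifier manipulation remains valid.
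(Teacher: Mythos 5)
Your proposal is correct and matches the paper's treatment: the paper derives the corollary ``immediately'' from Proposition 1, i.e.\ by exactly the quantifier negation and conjunction with the definition of LOCC* that you spell out. Your explicit handling of the degenerate case $\mathcal{M}\notin\mathrm{LOCC*}$ is a careful elaboration of what the paper leaves implicit, but it is the same argument.
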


Since the proposition is a necessary and sufficient condition for LOCC, it gives a new characterization of LOCC in terms of $p\left(i_1, \cdots,  i'_N|o_1, \cdots,  o'_N\right)$ and similarly, the corollary gives a new characterization of a deterministic joint quantum operation in LOCC* and not in LOCC, which is nothing but a non-LOCC separable quantum operation, in terms of causal order.  

\vspace{0.2cm}
\noindent{{\bf Proof}}

We provide a proof of the main proposition in the remaining part of Appendix A.
Since the conventional definition of LOCC in Eq.~(\ref{eq:LOCC2}) using the totally ordered local operations immediately gives a decomposition with $p\left(i_1, \cdots,  i'_N|o_1, \cdots, o'_N\right)$ respecting causal order, the ``only if'' part is trivial.  Hence we concentrate on proving the ``if'' part of the proposition. 

This proof consists of two steps.   In the first step, we show that for given local operations $\{\mathcal{A}^{(k)}_{o_k|i_k}\}_{k=1}^N \cup \{\mathcal{B}^{(k)}_{o_k|i_k}\}_{k=1}^N$ and $p\left(i_1, \cdots,  i'_N|o_1, \cdots, o'_N\right)$ respecting causal order, representing a deterministic joint quantum operation given by Eq.~(\ref{eq:LOCC*0}), it is always possible to construct sets of {\it totally} ordered local operations $\left\{\mathcal{C}^{(k)}_{\mathbb{O}_{f(k,0)}|\mathbb{J}_{f(k,0)}}\right\}_{k=1}^N$ and $\left\{\mathcal{D}^{(k)}_{\mathbb{O}_{f(k,1)}|\mathbb{J}_{f(k,1)}}\right\}_{k=1}^N$, where the input $\mathbb{J}_l$  of a local operation depends only on the output $\mathbb{O}_{l-1}$ of the previous local operation, representing the deterministic joint quantum operation by choosing appropriate noisy classical channels $Q(\mathbb{J}_1)$ and $\left\{Q(\mathbb{J}_k|\mathbb{O}_{k-1})\right\}_{k=2}^{2N}$.     In the second step, we show that the deterministic joint quantum operation given by Eq.~(\ref{eq:LOCC*0}) can be represented by another pair of sets of totally ordered local operations $\left\{\mathcal{C'}^{(k)}_{\mathbb{O}_{f(k,0)}|\mathbb{J}_{f(k,0)}}\right\}_{k=1}^N$ and $\left\{\mathcal{D'}^{(k)}_{\mathbb{O}_{f(k,1)}|\mathbb{J}_{f(k,1)}}\right\}_{k=1}^N$, connected by perfect classical channels. The local operations obtained in the second step are not totally ordered {\it alternately}, however,  the standard form of LOCC given by Eq.~(\ref{eq:LOCC2}) is derived by slightly modifying the local operations.

\vspace{0.5cm}
\noindent{\it Step 1: Construction of totally ordered local operations connected by one-to-one noisy classical channels}
\vspace{0.5cm}

Since there is always a total order that preserves the structure of a given partial order on a finite set,
we define a map $f$ satisfying 
\begin{itemize}
 \item $f$ is a bijection from $\{1,\cdots, N\} \times \{0,1\}$ 
to $\{1,\cdots, 2N \}$.
\item $f(k,b)<f(k+1,b)$ for all $k$ and $b$. 
\item {  $\mathcal{A}^{(k)}_{o_k|i_k} \prec \mathcal{B}^{(l)}_{o'_l|i'_l} 
\Rightarrow f(k,0) < f(l,1)$, and $\mathcal{B}^{(k)}_{o'_k|i'_k} \prec \mathcal{A}^{(l)}_{o_l|i_l} 
       \Rightarrow f(k,1) < f(l,0)$} for all $k$ and $l$.
\end{itemize}
By means of $f$, we define a set of classical inputs $\{I_l\}_{l=1}^{2N}$ and a set of outputs $\{O_l\}_{l=1}^{2N}$ whose elements are given by
\begin{equation}\label{app eq def I O}
I_{f(k,0)}=i_k, \ I_{f(k,1)}=i'_k, \  O_{f(k,0)}=o_k, \  
 \mbox{and } 
       O_{f(k,1)}=o'_k 
\end{equation}
for all $k$. We further define a conditional probability distribution $q(I_1,\cdots 
I_{2N}|O_1,\cdots O_{2N})$ by 
\begin{align}\label{app eq def q}
 q(I_1,\cdots, 
I_{2N}|O_1,\cdots, O_{2N}):=\nonumber\\
 p(i_1,\cdots, i_N, i'_1, \cdots i'_N|o_1,\cdots, o_N, o'_1, \cdots, o'_N),
\end{align}
where $I_l$ and $O_l$ are related to $i_k$, $i'_k$, $o_k$, and $o'_k$ by 
Eq. (\ref{app eq def I O}). Thus $q( \cdots | \cdots)$ is another representation of $p(\cdots |\cdots )$ in terms of
the newly defined classical inputs and outputs $\{I_l \}_{l=1}^{2N}$ and $\{O_l\}_{l=1}^{2N}$. Using $q(I_1,\cdots  I_{2N}|O_1,\cdots O_{2N})$,  the condition for $p\left(i_1, \cdots,  i'_N|o_1, \cdots, o'_N\right)$  to satisfy causal order given by Eq.~(\ref{App Eq p temporal ordering}) for all $k$ and $l$ is expressed by
\begin{equation}\label{app eq q temporal order}
 q(I_1,\cdots, 
I_{k}|O_1,\cdots, O_{2N}) = q(I_1,\cdots, 
I_{k}|O_1,\cdots, O_{k-1})
\end{equation}
for all $k$. 

Next, we define sets of all possible values of $I_k$ and $O_k$ denoted by $\mathcal{I}_k$ and $\mathcal{O}_k$, respectively,  for $k=1, \cdots 2N$.  We represent variables whose variations are over $\prod_{j=1}^{k}\mathcal{I}_j \times \prod_{j=1}^{k-1}\mathcal{O}_j $ and $\prod_{j=1}^{k}\mathcal{I}_j \times \prod_{j=1}^{k}\mathcal{O}_j $ 
by $\mathbb{J}_k$ and $\mathbb{O}_k$, respectively.    
$\{ \mathbb{J}_k\}_{k=1}^{2N}$ and  $\{ \mathbb{O}_k \}_{k=1}^{2N}$ are 
classical inputs and outputs of new local operations, which will be introduced latter.  We define a function $Q(\mathbb{J}_k|\mathbb{O}_{k-1})$ representing a classical channel linking between $\mathbb{J}_k$ and $\mathbb{O}_{k-1}$ as
\begin{align}
 Q(\mathbb{J}_k|\mathbb{O}_{k-1}) := &
 \frac{q(I_1,\cdots,I_k|O_1,\cdots,O_{k-1})}{q(I_1,\cdots,I_{k-1}|O_1,\cdots, O_{k-2})} 
 \nonumber\\
 &\cdot \delta \left(\mathbb{J}_k[1,k-1],  
 \mathbb{O}_{k-1}[1,k-1]\right) \nonumber \\
 &\cdot \delta \left(\mathbb{J}_k[k+1,2k-1], 
 \mathbb{O}_{k-1}[k,2k-2]\right), \nonumber\\
 &\quad (2 \le k \le 2N) \label{app eq Q 1}\\
  Q(\mathbb{J}_1) :=& q(I_1), \label{app eq Q 2}
\end{align}
where $\delta(x,y)$ is the Kronecker delta, $\mathbb{J}_k=(I_1,\cdots I_{k}, O_1, \cdots, O_{k-1})$ for $2 
\le k \le 2N$, $\mathbb{J}_1=I_1$, and $\mathbb{J}_k[l,m]$ is a vector 
consisting of the partial entries of $\mathbb{J}_k$, from the $l$-th entry through the $m$-th entry for $l<m$.

It is easy to check that $Q(\mathbb{J}_k|\mathbb{O}_{k-1})$ satisfies
\begin{align}
&\sum _{\mathbb{J}_1,\cdots, \mathbb{J}_{2N}} Q(\mathbb{J}_1) \cdot \delta(\mathbb{J}_1,\mathbb{O}_1[1])\nonumber\\
&\qquad\prod 
 _{k=2}^{2N}Q(\mathbb{J}_k|\mathbb{O}_{k-1})\cdot \delta 
 (\mathbb{J}_k,\mathbb{O}_k[1,2k-1])  \nonumber \\
=& q(I_1^{(2N)},\cdots I_{2N}^{(2N)}|O_1^{(2N)}, \cdots, O_{2N}^{(2N)}) 
 \cdot 
 \nonumber \\
& \qquad  \prod _{k=2}^{2N}\delta \Big( (I_1^{(k)},\cdots, I_{k-1}^{(k)}, 
 O_1^{(k)},\cdots,  O_{k-1}^{(k)}),\nonumber\\&\qquad
  (I_1^{(k-1)},\cdots, I_{k-1}^{(k-1)}, O_1^{(k-1)},\cdots, 
 O_{k-1}^{(k-1)}) \Big),
\label{eqn:Qrelation}
\end{align}
where $I_l^{(k)}$ and $O_l^{(k)}$ are given by $\mathbb{O}_k=(I_1^{(k)},\cdots, 
I_{k}^{(k)}, O_1^{(k)}, \cdots O_{k}^{(k)})$, and $\mathbb{O}_k[l]$ denotes the $l$-th entry of 
$\mathbb{O}_k$.  Eq.~(\ref{eqn:Qrelation}) indicates that $Q(\mathbb{J}_1) \prod 
 _{k=2}^{2N}Q(\mathbb{J}_k|\mathbb{O}_{k-1})$ and $q(I_1^{(2N)},\cdots I_{2N}^{(2N)}|O_1^{(2N)}, \cdots, O_{2N}^{(2N)})$ represent the same classical channel if $\mathbb{J}_k=\mathbb{O}_k[1,k-1]$, $\mathbb{O}_k[1,k-1]=\mathbb{O}_{k-1}[1,k-1]$, and $\mathbb{O}_k[k+1,2k]=\mathbb{O}_{k-1}[k,2k-2]$ are satisfied.

We define a local operation performed in Alice's laboratory for $\mathbb{J}_{f(k,0)}$ and $\mathbb{O}_{f(k,0)}$
denoted by
$\mathcal{C}^{(k)}_{\mathbb{O}_{f(k,0)}|\mathbb{J}_{f(k,0)}}$ as 
\begin{align}
 \mathcal{C}^{(k)}_{\mathbb{O}_{f(k,0)}|\mathbb{J}_{f(k,0)}} &:=\delta \left (\mathbb{J}_{f(k,0)}, 
 \mathbb{O}_{f(k,0)}[1,2f(k,0)-1] \right )\nonumber\\&\qquad \cdot 
 \mathcal{A}^{(k)}_{\mathbb{O}_{f(k,0)}[2f(k,0)]|\mathbb{J}_{f(k,0)}[f(k,0)]}, \label{app eq def mathcal C 1}
\end{align}
where $\mathbb{O}_{1}[1,1]:=\mathbb{O}_{1}[1]$ for $f(k,0)=1$ and  $\mathbb{J}_{k}[l]$ represents the $l$-th entry  of $\mathbb{J}_{k}$.
Similarly, we define a local operation performed in Bob's laboratory denoted by
$\mathcal{D}^{(k)}_{\mathbb{O}_{f(k,1)}|\mathbb{J}_{f(k,1)}}$ as 
\begin{align}
 \mathcal{D}^{(k)}_{\mathbb{O}_{f(k,1)}|\mathbb{J}_{f(k,1)}} :=&\delta \left (\mathbb{J}_{f(k,1)}, 
 \mathbb{O}_{f(k,1)}[1,2f(k,1)-1] \right ) \nonumber\\&\qquad\cdot 
 \mathcal{B}^{(k)}_{\mathbb{O}_{f(k,1)}[2f(k,1)]|\mathbb{J}_{f(k,1)}[f(k,1)]}. \label{app eq def mathcal D 1}
\end{align}

By linking classical outputs and inputs of local operations
$\mathcal{C}^{(k)}_{\mathbb{O}_{f(k,0)}|\mathbb{J}_{f(k,0)}}$ and 
$\mathcal{D}^{(k)}_{\mathbb{O}_{f(k,1)}|\mathbb{J}_{f(k,1)}}$ satisfying
the total order introduced by  the function $f$, we derive a representation of the deterministic joint quantum operation given by Eq.~(\ref{eq:LOCC*0}) as follows:
\begin{align}
& \sum_{\substack{\mathbb{J}_1,\cdots ,\mathbb{J}_{2N}\\\mathbb{O}_1,\cdots , 
 \mathbb{O}_{2N}}}Q(\mathbb{J}_1)\cdot \left(\prod 
 _{k=2}^{2N}Q(\mathbb{J}_k|\mathbb{O}_{k-1}) \right) \nonumber\\&
\qquad \mathcal{C}^{(N)}_{\mathbb{O}_{f(N,0)}|\mathbb{J}_{f(N,0)}}\circ \cdots \circ
\mathcal{C}^{(1)}_{\mathbb{O}_{f(1,0)}|\mathbb{J}_{f(1,0)}} \nonumber\\&\qquad\otimes \mathcal{D}^{(N)}_{\mathbb{O}_{f(N,1)}|\mathbb{J}_{f(N,1)}}\circ \cdots \circ
\mathcal{D}^{(1)}_{\mathbb{O}_{f(1,1)}|\mathbb{J}_{f(1,1)}} \nonumber \\
=&\sum_{\substack{\mathbb{J}_1,\cdots ,\mathbb{J}_{2N}\\\mathbb{O}_1,\cdots , 
 \mathbb{O}_{2N}}}Q(\mathbb{J}_1) \cdot \delta\left( 
 \mathbb{J}_1|\mathbb{O}_1[1]\right)  \nonumber\\&\qquad\cdot \Big(\prod 
 _{k=2}^{2N}Q(\mathbb{J}_k|\mathbb{O}_{k-1})\cdot  \delta\left(\mathbb{J}_{k},\mathbb{O}_{k}[1,2k-1]\right)\Big)  \nonumber \\
& \qquad \mathcal{A}^{(N)}_{\mathbb{O}_{f(N,0)}[2f(N,0)]|\mathbb{J}_{f(N,0)}[f(N,0)]}\circ \cdots\nonumber\\&\qquad\cdots \circ
\mathcal{A}^{(1)}_{\mathbb{O}_{f(1,0)}[2f(1,0)]|\mathbb{J}_{f(1,0)}[f(1,0)]} \nonumber\\
&\qquad \otimes \mathcal{B}^{(N)}_{\mathbb{O}_{f(N,1)}[2f(N,1)]|\mathbb{J}_{f(N,1)}[f(N,1)]}\circ \cdots\nonumber\\&\qquad\cdots \circ
\mathcal{B}^{(1)}_{\mathbb{O}_{f(1,1)}[2f(1,1)]|\mathbb{J}_{f(1,1)}[f(1,1)]} 
\nonumber \\
=&\sum_{\mathbb{O}_1,\cdots , 
 \mathbb{O}_{2N}}
q(I_1^{(2N)},\cdots I_{2N}^{(2N)}|O_1^{(2N)}, \cdots, O_{2N}^{(2N)}) 
 \cdot 
 \nonumber \\
& \qquad  \prod _{k=2}^{2N}\delta \Big( (I_1^{(k)},\cdots, I_{k-1}^{(k)}, 
 O_1^{(k)},\cdots, 
 O_{k-1}^{(k)}),\nonumber\\&\qquad(I_1^{(k-1)},\cdots, I_{k-1}^{(k-1)}, O_1^{(k-1)},\cdots, 
 O_{k-1}^{(k-1)}) \Big)
 \nonumber \\
& \qquad \mathcal{A}^{(N)}_{\mathbb{O}_{f(N,0)}[2f(N,0)]|\mathbb{O}_{f(N,0)}[f(N,0)]}\circ \cdots\nonumber\\&\qquad\cdots \circ
\mathcal{A}^{(1)}_{\mathbb{O}_{f(1,0)}[2f(1,0)]|\mathbb{O}_{f(1,0)}[f(1,0)]} \nonumber\\
&\qquad \otimes \mathcal{B}^{(N)}_{\mathbb{O}_{f(N,1)}[2f(N,1)]|\mathbb{O}_{f(N,1)}[f(N,1)]}\circ \cdots \nonumber\\&\qquad\cdots\circ
\mathcal{B}^{(1)}_{\mathbb{O}_{f(1,1)}[2f(1,1)]|\mathbb{O}_{f(1,1)}[f(1,1)]} 
\nonumber\\
=&\sum_{\mathbb{O}_{2N}}
q(I_1^{(2N)},\cdots I_{2N}^{(2N)}|O_1^{(2N)}, \cdots, O_{2N}^{(2N)}) 
 \cdot 
 \nonumber \\
& \qquad \mathcal{A}^{(N)}_{\mathbb{O}_{2N}[N+f(N,0)]|\mathbb{O}_{2N}[f(N,0)]}\circ \cdots\nonumber\\&\qquad\cdots \circ
\mathcal{A}^{(1)}_{1,\mathbb{O}_{2N}[N+f(1,0)]|\mathbb{O}_{2N}[f(1,0)]} \nonumber\\
&\qquad \otimes \mathcal{B}^{(N)}_{N,\mathbb{O}_{2N}[N+f(N,1)]|\mathbb{O}_{2N}[f(N,1)]}\circ \cdots \nonumber\\&\qquad\cdots\circ
\mathcal{B}^{(1)}_{\mathbb{O}_{2N}[N+f(1,1)]|\mathbb{O}_{2N}[f(1,1)]} 
\nonumber\\
=& \sum_{\substack{i_1,\cdots,i_N\\i'_1,\cdots, 
 i'_N\\o_1,\cdots,o_N\\o'_1,\cdots,o'_N}}p\left(i_1,\cdots,i_N,i'_1,\cdots, 
 i'_N|o_1,\cdots,o_N,o'_1,\cdots,o'_N\right) \nonumber \\
& \qquad \mathcal{A}^{(N)}_{o_N|i_N}\circ \cdots \circ
\mathcal{A}^{(1)}_{o_1|i_1} \otimes \mathcal{B}^{(N)}_{o'_N|i'_N}\circ \cdots \circ
\mathcal{B}^{(1)}_{o'_1|i'_1}, \label{app eq longest}
\end{align}
 where we used Eqs.~(\ref{app eq def mathcal C 1}) and (\ref{app eq def 
 mathcal D 1}) in the first equality, Eqs.~(\ref{app eq Q 1}) and 
 (\ref{app eq Q 2}) in the second equality, and Eq.~(\ref{app eq def q}) in 
 the fourth equality.
Note that in the first line of Eq.~(\ref{app eq longest}), 
classical input $\mathbb{J}_k$ of a local operation only depends on classical 
output $\mathbb{O}_{k-1}$  of the previous local operation in $Q(\mathbb{J}_k|\mathbb{O}_{k-1})$.

\vspace{0.5cm}
\noindent{\it Step 2: Modification of noisy classical channels into perfect classical channels}
\vspace{0.5cm}

In this step, we define new local operations performed in Alice's and Bob's laboratories denoted by
$\mathcal{C'}^{(k)}_{\mathbb{O}_{f(k,0)}|\mathbb{O}_{f(k,0)-1}}$ and 
$\mathcal{D'}^{(k)}_{\mathbb{O}_{f(k,1)}|\mathbb{O}_{f(k,1)-1}}$, respectively,
and show that the first line of Eq.~(\ref{app eq longest}) can be transformed to
the standard form of LOCC using these local operations. 
We define
$\mathcal{C'}^{(k)}_{\mathbb{O}_{f(k,0)}|\mathbb{O}_{f(k,0)-1}}$ and 
$\mathcal{D'}^{(k)}_{\mathbb{O}_{f(k,1)}|\mathbb{O}_{f(k,1)-1}}$ 
as 
\begin{align}
 \mathcal{C'}^{(k)}_{\mathbb{O}_{f(k,0)}|\mathbb{O}_{f(k,0)-1}}:= 
 \sum_{\mathbb{J}_{f(k,0)}}Q(\mathbb{J}_{f(k,0)}|\mathbb{O}_{f(k,0)-1})
 \mathcal{C}^{(k)}_{\mathbb{O}_{f(k,0)}|\mathbb{J}_{f(k,0)}}, \nonumber \\
 \mathcal{D'}^{(k)}_{\mathbb{O}_{f(k,1)}|\mathbb{O}_{f(k,1)-1}}:= 
 \sum_{\mathbb{J}_{f(k,1)}}Q(\mathbb{J}_{f(k,1)}|\mathbb{O}_{f(k,1)-1})
 \mathcal{D}^{(k)}_{\mathbb{O}_{f(k,1)}|\mathbb{J}_{f(k,1)}}, \label{app eq def mathcal C' D'}
\end{align}
where we have used $Q(\mathbb{J}_{1}|\mathbb{O}_{0}):=Q(\mathbb{J}_{1})$.  By means of Eqs.~(\ref{app eq Q 1}) and (\ref{app eq Q 2}), we obtain a relation
\begin{align}
   &\sum_{\mathbb{O}_{f(k,0)}}\mathcal{C'}^{(k)}_{\mathbb{O}_{f(k,0)}|\mathbb{O}_{f(k,0)-1}} \nonumber\\&= \sum_{I_{f(k,0)},O_{f(k,0)}}\frac{q\left(I_1,\cdots , I_{f(k,0)}|O_1, 
 \cdots, O_{f(k,0)-1}\right)}{q\left(I_1,\cdots , I_{f(k,0)-1}|O_1, 
 \cdots, 
 O_{f(k,0)-2}\right)}\nonumber\\&\qquad\qquad\qquad\qquad\mathcal{A}^{(k)}_{O_{f(k,0)}|\left(I_{f(k,0)}\right)},\label{app  eq tp}
\end{align}
where $I_{f(N,0)}$  and $O_{f(N,0)}$ are the $f(N,0)$-th 
and $2f(N,0)$-th entries of $\mathbb{O}_{f(N,0)}$, respectively, and 
$\{I_l \}_{l=1}^{f(k,0)-1}$ and $\{O_l \}_{l=1}^{f(k,0)-1}$ are given 
by $\mathbb{O}_{f(k,0)-1}=\left(I_1, \cdots,  I_{f(k,0)-1}|O_1, \cdots, O_{f(k,0)-1} \right)$.  Eq.~(\ref{app eq q temporal order}) represents the property of $q(I_1,\cdots, I_{2N}|O_1,\cdots, O_{2N})$ satisfying causal order and Eq.~(\ref{app eq tp}) guarantees that  
$\{ 
\mathcal{C'}^{(k)}_{\mathbb{O}_{f(k,0)}|\mathbb{O}_{f(k,0)-1}}\}_{\mathbb{O}_{f(k,0)}}
$ is indeed a quantum instrument for all $k$. Similarly,  $\{ 
\mathcal{D'}^{(k)}_{\mathbb{O}_{f(k,1)}|\mathbb{O}_{f(k,1)-1}}\}_{\mathbb{O}_{f(k,1)}}
$ is also shown to be a quantum instrument for all $k$. 

Now the deterministic joint quantum operation given in the form of Eq.~(\ref{eq:LOCC*0}) can be represented in terms of quantum instruments $\{ 
\mathcal{C'}^{(k)}_{\mathbb{O}_{f(k,0)}|\mathbb{O}_{f(k,0)-1}}\}_{\mathbb{O}_{f(k,0)}}
$ and $\{ 
\mathcal{D'}^{(k)}_{\mathbb{O}_{f(k,1)}|\mathbb{O}_{f(k,1)-1}}\}_{\mathbb{O}_{f(k,1)}}
$ as
\begin{align}
 & \sum_{\substack{i_1,\cdots,i_N\\i'_1,\cdots, 
 i'_N\\o_1,\cdots,o_N\\o'_1,\cdots,o'_N}}p\left(i_1,\cdots,i_N,i'_1,\cdots, 
 i'_N|o_1,\cdots,o_N,o'_1,\cdots,o'_N\right) \nonumber \\
& \qquad \mathcal{A}^{(N)}_{o_N|i_N}\circ \cdots \circ
\mathcal{A}^{(1)}_{o_1|i_1} \otimes \mathcal{B}^{(N)}_{o'_N|i'_N}\circ \cdots \circ
\mathcal{B}^{(1)}_{o'_1|i'_1}, \nonumber \\
=&
\sum_{\mathbb{O}_1,\cdots , \mathbb{O}_{2N}}
 \mathcal{C'}^{(N)}_{\mathbb{O}_{f(N,0)}|\mathbb{O}_{f(N,0)-1}}\circ \cdots \circ
\mathcal{C'}^{(1)}_{\mathbb{O}_{f(1,0)}|\mathbb{O}_{f(1,0)-1}} \nonumber\\&\qquad\otimes \mathcal{D'}^{(N)}_{\mathbb{O}_{f(N,1)}|\mathbb{O}_{f(N,1)-1}}\circ \cdots \circ
\mathcal{D'}^{(1)}_{\mathbb{O}_{f(1,1)}|\mathbb{O}_{f(1,1)-1}}. 
\label{app eq final}
\end{align}
by using Eq.~(\ref{app eq def mathcal C' D'}).  
The right hand side of Eq.~(\ref{app eq final}) is almost 
in the standard form of LOCC.  The only case of Eq.~(\ref{app eq final}) not fitting in LOCC is that where Alice (Bob) performs two local operations successively.   This case is absorbed in LOCC in the following manner.  Suppose Alice 
successively performs two local operations $\mathcal{C'}^{(k-1)}_{\mathbb{O}_{f(k-1,0)}|\mathbb{O}_{f(k-1,0)-1}}$ and $\mathcal{C'}^{(k)}_{\mathbb{O}_{f(k,0)}|\mathbb {O}_{f(k,0)-1}}$ where $f(k,0)=f(k-1,0)+1$.  Since 
\begin{equation}
\left \{\sum_{\mathbb{O}_{f(k-1,0)}}\mathcal{C'}^{(k)}_{\mathbb{O}_{f(k,0)}|\mathbb
{O}_{f(k,0)-1}}\circ 
\mathcal{C'}^{(k-1)}_{\mathbb{O}_{f(k-1,0)}|\mathbb{O}_{f(k-1,0)-1}} \right 
\}_{\mathbb{O}_{f(k,0)}} \nonumber 
\end{equation}
is a quantum instrument representing a local operation performed in Alice's laboratory,  
we regard these successive local operations as a single local 
operation performed in Alice's laboratory.  Similarly, we combine successive local operations performed in Bob's laboratory into a single local operation.  By repeating this procedure, we can rewrite Eq.~(\ref{app eq final}) in the standard form of LOCC, where a sequence of local operations is performed alternatively in Alice's and Bob's laboratories.  Hence, we can conclude that Eq.~(\ref{app eq final}) reduces to a standard decomposition of LOCC given in the form of Eq.~(\ref{eq:LOCC2}).
Therefore,  the ``{\it if}'' part of the proposition is proven.

\renewcommand{\theequation}{B\arabic{equation}}
\section{Analysis of LOCC** in terms of the CJ representation}

In this appendix, we analyze LOCC**, a set of deterministic joint quantum operations where local operations are linked by a classical process matrix. Since the process matrix is formulated by using  the Choi-Jamio\l{}kowski (CJ) representation of quantum operations, we first review CJ representations and classical process matrix, next give the formal definition of LOCC**, and then show that LOCC** can be reduced to a probabilistic mixture of one-way LOCC in a bipartite setting.

\subsection{The Choi-Jamio\l{}kowski representation}

The Choi-Jamio\l{}kowski (CJ) representation is a way to represent general quantum operations given by quantum instruments as linear operators on a composite Hilbert space of an input Hilbert space and an output Hilbert space.   We denote a CJ representation of a quantum operation described by a map $\mathcal{M}$ by a linear operator $M$.  The Hilbert spaces of a quantum input system and  a quantum output system are denoted by $\mathcal{H}_{in}$ and $\mathcal{H}_{out}$, respectively.  $L(\mathcal{H})$ denotes the space of linear operators over a Hilbert space $\mathcal{H}$.   For a map representing an element of a quantum instrument $\mathcal{M}_{o|i}:L(\mathcal{H}_{in})\rightarrow L(\mathcal{H}_{out})$, where $i$ is an index of the classical input and $o$ is an index of the classical output,  the corresponding CJ operator $M_{o|i}\in L(\mathcal{H}_{in}\otimes\mathcal{H}_{out})$  is given by
\begin{equation}
M_{o|i}=\sum_{k,l}\ket{k}\bra{l}\otimes\mathcal{M}_{o|i}(\ket{k}\bra{l}),
\end{equation}
where $\{\ket{k}\}$ is the (fixed) computational basis on $\mathcal{H}_{in}$.  The state of a quantum output for a quantum input $\rho_{in}$ given by a linear map $\mathcal{M}_{o|i}$ is obtained by using the CJ operator $M_{o|i}$ as 
\begin{equation}
\mathcal{M}_{o|i}(\rho_{in})=\mathrm{tr}_{in}[M_{o|i}(\rho_{in}^T\otimes\mathbb{I}_{out})]
\end{equation}
where $\mathbb{I}_{out}$ is the identity operator on $\mathcal{H}_{out}$, and $\rho_{in}^T$ is the transposition of $\rho_{in}$ with respect to the computational basis.  Note that the output state does not depend on the choice of the computational basis.
$\mathcal{M}_{o|i}$ is completely positive (CP) if and only if $M_{o|i}$ is a positive semidefinite operator. 
$\sum_o\mathcal{M}_{o|i}$ is trace preserving (TP) if and only if $\mathrm{tr}_{out}[\sum_o M_{o|i}]=\mathbb{I}_{in}$.

\subsection*{Classical process matrix}

 A process matrix \cite{OFC} represents a higher order map transforming quantum operations to another quantum operation. We denote the set of all positive semi-definite operators on a Hilbert space $\mathcal{H}$ as $Pos(\mathcal{H})$ and the set of all CJ operators on $\mathcal{H}_{I}\otimes\mathcal{H}_{O}$ representing CPTP maps from $\mathcal{H}_I$
to $\mathcal{H}_O$ as $CPTP(\mathcal{H}_{I}:\mathcal{H}_{O}):=\{ Q \in Pos(\mathcal{H}_{I}\otimes\mathcal{H}_{O})|\mathrm{tr}_{O} Q =\mathbb{I}_I\}$.

To be consistent with quantum mechanics, it has been proven in \cite{OFC} that a process matrix linking two local operations represented by the CJ operators $M_A\in L(\mathcal{H}_{I_A}\otimes\mathcal{H}_{O_A})$ and $M_B\in L(\mathcal{H}_{I_B}\otimes\mathcal{H}_{O_B})$ as inputs, is a positive semi-definite operator $W \in Pos(\mathcal{H}_{I_A}\otimes\mathcal{H}_{O_A}\otimes\mathcal{H}_{I_B}\otimes\mathcal{H}_{O_B})$ and that satisfies
\begin{equation}
\mathrm{tr}\big[W(M_A^\mathrm{T}\otimes M_B^\mathrm{T})\big]=1,
\label{eq:process}
\end{equation}
for all $M_A\in CPTP(\mathcal{H}_{I_A}:\mathcal{H}_{O_A})$ and $M_B\in CPTP(\mathcal{H}_{I_B}:\mathcal{H}_{O_B})$, where $Q^\mathrm{T}$ represents the transposition of $Q$ with respect to the computational basis used in defining the CJ representation. 

A deterministic joint quantum operation $\mathcal{M}:L(\mathcal{H}_X\otimes\mathcal{M}_Y)\rightarrow L(\mathcal{H}_A\otimes\mathcal{H}_B)$ can be regarded as being obtained by transforming two local operations  $\mathcal{A}:L(\mathcal{H}_{I_A}\otimes\mathcal{H}_X)\rightarrow L(\mathcal{H}_{O_A}\otimes\mathcal{H}_A)$ and $\mathcal{B}:L(\mathcal{H}_{I_B}\otimes\mathcal{H}_Y)\rightarrow L(\mathcal{H}_{O_B}\otimes\mathcal{H}_B)$ by  a process matrix $W \in Pos(\mathcal{H}_{I_A}\otimes\mathcal{H}_{O_A}\otimes\mathcal{H}_{I_B}\otimes\mathcal{H}_{O_B})$  linking the Hilbert spaces $\mathcal{H}_{I_A}$, $\mathcal{H}_{O_A}$, $\mathcal{H}_{I_B}$ and $\mathcal{H}_{O_B}$.  The CJ operator $M \in L(\mathcal{H}_X\otimes\mathcal{H}_Y\otimes \mathcal{H}_A\otimes\mathcal{H}_B)$ of $\mathcal{M}$ obtained by transforming the CJ operators of local operations $A \in L(\mathcal{H}_{I_A}\otimes\mathcal{H}_X \otimes \mathcal{H}_{O_A}\otimes\mathcal{H}_A )$ and $B\in L(\mathcal{H}_{I_B}\otimes\mathcal{H}_Y \otimes \mathcal{H}_{O_B}\otimes\mathcal{H}_B )$ corresponding to $\mathcal{A}$ and $\mathcal{B}$, respectively, by $W$ satisfying Eq.~(\ref{eq:process})  is represented by 
\begin{equation}
M=\mathrm{tr}_{I_A,O_A,I_B,O_B}\big[W(A^{\mathrm{T}_{I_A,O_A}}\otimes B^{\mathrm{T}_{I_B,O_B}})\big],
\label{eq:LOCC**Wmatrix}
\end{equation}
where $Q^{\mathrm{T}_{I_X,O_X}}$ is the partial transposition of $Q$, taking the transposition only in terms of $\mathcal{H}_{I_X}$ and $\mathcal{H}_{O_X}$. 

Quantum communication from Alice to Bob or Bob to Alice can be described by $W$ if $W$ is equivalent to the CJ operator  representing a quantum channel  linking two causally ordered local operations performed by different parties. Moreover, process matrices can represent ``quantum communication'' without causal order, which is not implementable when the partial order of local operations is fixed but is not ruled out in the framework of quantum mechanics \cite{OFC}.    The restriction for process matrices given by Eq.~\eqref{eq:process} can be interpreted to represent a new kind of causality required for linking local operations in quantum mechanics.

 We consider a special class of process matrix where Alice and Bob can communicate only by ``classical'' states.  In this case, the states on the local input and output Hilbert spaces of $W$, $\mathcal{H}_{I_A}$, $\mathcal{H}_{O_A}$, $\mathcal{H}_{I_B}$ and $\mathcal{H}_{O_B}$, are restricted and are diagonal with respect to the computational basis.  In such a case,  the process matrix $W$ is called a {\it classical} process matrix, representing classical communication between the parties, and can be described by a conditional probability distribution $p(i_A,i_B|o_A,o_B)$ satisfying
\begin{eqnarray}
\sum_{i_A,i_B,o_A,o_B} p(i_A,i_B|o_A,o_B)A_{o_A|i_A}\otimes B_{o_B|i_B}\nonumber\\\quad\in CPTP(\mathcal{H}_X\otimes\mathcal{H}_Y:\mathcal{H}_A\otimes \mathcal{H}_B)
\label{eq:LOCC**}
\end{eqnarray}
for all quantum instruments $\{A_{o_A|i_A}\}_{o_A}$ and $\{B_{o_B|i_B} \}_{o_B}$, where $\{A_{o_A|i_A}\in Pos(\mathcal{H}_X\otimes \mathcal{H}_A)\}_{o_A}$ and $\{B_{o_B|i_B}\in Pos(\mathcal{H}_Y\otimes \mathcal{H}_B)\}_{o_B}$ satisfying $\sum_{o_A}\mathrm{tr}_{A}[A_{o_A|i_A}]=\mathbb{I}_X$ and $\sum_{o_B}\mathrm{tr}_{B}[B_{o_B|i_B}]=\mathbb{I}_Y$.
The proof is given  as follows.

\begin{proof}
 By denoting the computational basis of $\mathcal{H}_{X}$ as $\{\ket{x}_{X}\}$,  local operations $M_A\in CPTP(\mathcal{H}_{I_A}:\mathcal{H}_{O_A})$, $M_B\in CPTP(\mathcal{H}_{I_B}:\mathcal{H}_{O_B})$, and a classical process matrix described by a diagonal process matrix $W\in Pos(\mathcal{H}_{I_A}\otimes\mathcal{H}_{O_A}\otimes\mathcal{H}_{I_B}\otimes\mathcal{H}_{O_B})$ can be decomposed into
\begin{eqnarray}
M_A&=&\sum_{i_A,o_A}p_A(o_A|i_A)\ket{i_A,o_A}\bra{i_A,o_A}_{I_A,O_A},\\
M_B&=&\sum_{i_B,o_B}p_B(o_B|i_B)\ket{i_B,o_B}\bra{i_B,o_B}_{I_B,O_B},\\
W&=&\sum_{\substack{i_A,o_A\\i_B,o_B}}w(i_A,i_B,o_A,o_B)\nonumber\\&&\qquad\ket{i_A,i_B,o_A,o_B}\bra{i_A,i_B,o_A,o_B}_{I_A,I_B,O_A,O_B}, \nonumber\\\label{eq:diagonalW}
\end{eqnarray}
where $w(i_A,i_B,o_A,o_B)$ represents a diagonal element of $W$, and $p_A(o_A|i_A)$ and $p_B(o_B|i_B)$ are conditional probability distributions since $M_A$ and $M_B$ are CPTP maps.  In the following, we show that classical process matrices correspond to conditional probability distributions, namely, $w(i_A,i_B,o_A,o_B)$ must be a conditional probability distribution so that Eq.~\eqref{eq:process} holds.
The non-negativity of $W$ implies $w(i_A,i_B,o_A,o_B)\geq 0$ and the condition given by Eq.~\eqref{eq:process} is equivalent to 
\begin{equation}
\sum_{i_A,i_B,o_A,o_B} w(i_A,i_B,o_A,o_B)p_A(o_A|i_A)p_B(o_B|i_B)=1
\end{equation}
for all conditional probability distributions $p_A(o_A|i_A)$ and $p_B(o_B|i_B)$.  By choosing $p_A(o_A|i_A)=\delta_{o_A,a}$ and $p_B(o_B|i_B)=\delta_{o_B,b}$, we obtain $\sum_{i_A,i_B} w(i_A,i_B,a,b)=1$ for arbitrary $a$ and $b$.  Thus $w(i_A,i_B,o_A,o_B)$ can be represented by a conditional probability distribution conditioned by $o_A$ and $o_B$, and we define
\begin{equation}
p(i_A,i_B|o_A,o_B):=w(i_A,i_B,o_A,o_B).
\end{equation}

To satisfy \eqref{eq:process}, $p(i_A,i_B|o_A,o_B)$ satisfies
\begin{equation}
\sum_{i_A,i_B,o_A,o_B} p(i_A,i_B|o_A,o_B)p_A(o_A|i_A)p_B(o_B|i_B)=1
\label{eq:def1_CC**}
\end{equation}
for all conditional probability distributions $p_A(o_A|i_A)$ and $p_B(o_B|i_B)$. The condition given by Eq.~\eqref{eq:def1_CC**} is equivalent to that given by Eq.~\eqref{eq:LOCC**} if $p(i_A,i_B|o_A,o_B)$ is a conditional probability distribution.  This can be shown as follows.   By letting $dim(\mathcal{H}_X)=dim(\mathcal{H}_Y)=dim(\mathcal{H}_A)=dim(\mathcal{H}_B)=1$, $A_{o_A|i_A}=p(o_A|i_A)$ and $B_{o_B|i_B}=p(o_B|i_B)$, it is easy to confirm that Eq.~\eqref{eq:def1_CC**} holds if Eq.~\eqref{eq:LOCC**} holds.  To show the converse, we first check that a map decomposable in the form of Eq.~\eqref{eq:LOCC**} is completely positive.  This is also easily checked since every term in the summation is non-negative.   Then we show that a map decomposable in the form of Eq.~\eqref{eq:LOCC**} is trace preserving when Eq.~\eqref{eq:def1_CC**} holds in the following.
For any operator $\sigma\in L(\mathcal{H}_X\otimes\mathcal{H}_Y)$,
\begin{eqnarray}
&\mathrm{tr}_{A,B}\Big[\mathrm{tr}_{X,Y}\big[\sum_{\substack{i_A,i_B\\o_A,o_B}} p(i_A,i_B|o_A,o_B)\nonumber\\
&\qquad(A_{o_A|i_A}\otimes B_{o_B|i_B}) \sigma^\mathrm{T}\big]\Big]\\
=&\sum_{k,l}\lambda_{k,l}\sum_{\substack{i_A,i_B\\o_A,o_B}} p(i_A,i_B|o_A,o_B)\nonumber\\
&\qquad\mathrm{tr}_{A,X}\big[A_{o_A|i_A}\rho_k^\mathrm{T}\big]\mathrm{tr}_{B,Y}\big[B_{o_B|i_B}\rho_l^\mathrm{T}\big]\\
=&\sum_{k,l}\lambda_{k,l}=\mathrm{tr}[\sigma],
\end{eqnarray}
where $\sigma$ is decomposed as $\sigma=\sum_{k.l}\lambda_{k,l}\rho_k\otimes\rho_l$ by using density operators $\{\rho_k\}_k$ as a basis of the linear space of the operator.  Note that $\mathrm{tr}_{A,X}\big[A_{o_A|i_A}\rho_k^T\big]$ is a conditional probability distribution conditioned by $i_A$ since it satisfies $\sum_{o_A}\mathrm{tr}_{A,X}\big[A_{o_A|i_A}\rho_k^T\big]=\mathrm{tr}_{A,X}\big[\sum_{o_A}A_{o_A|i_A}\rho_k^T\big]=\mathrm{tr}_{X}\big[\rho_k^T\big]=1$.
\end{proof}

\subsection*{Formal definition of LOCC**}
Next, we define a set of deterministic joint quantum operations consisting of local operations linked by a classical process matrix, denoted by LOCC**. Local operations $A\in CPTP(\mathcal{H}_{I_A}\otimes\mathcal{H}_{X}:\mathcal{H}_{O_A}\otimes\mathcal{H}_{A})$ and $B\in CPTP(\mathcal{H}_{I_B}\otimes\mathcal{H}_{Y}:\mathcal{H}_{O_B}\otimes\mathcal{H}_{B})$ linked by a diagonal process matrix $W\in Pos(\mathcal{H}_{I_A}\otimes\mathcal{H}_{I_B}\otimes\mathcal{H}_{O_A}\otimes\mathcal{H}_{O_B})$ given in the form of Eq.~\eqref{eq:diagonalW} can be decomposed into
\begin{eqnarray}
A&=&\sum_{i_A,o_A}A_{o_A|i_A}\otimes\ket{i_A,o_A}\bra{i_A,o_A}_{I_A,O_A},\\
B&=&\sum_{i_B,o_B}B_{o_B|i_B}\otimes\ket{i_B,o_B}\bra{i_B,o_B}_{I_B,O_B},
\end{eqnarray}
where $\{A_{o_A|i_A}\in Pos(\mathcal{H}_X\otimes \mathcal{H}_A)\}_{o_A}$ and $\{B_{o_B|i_B}\in Pos(\mathcal{H}_Y\otimes \mathcal{H}_B)\}_{o_B}$ are the CJ operators of quantum instruments since $A$ and $B$ are CPTP maps.
By straightforward calculation, the CJ operator of a deterministic joint quantum operation is written as
\begin{eqnarray}
M&=&\mathrm{tr}_{I_A,O_A,I_B,O_B}\big[W(A^{\mathrm{T}_{I_A,O_A}}\otimes B^{\mathrm{T}_{I_B,O_B}})\big]
\\
&=&\sum_{\substack{i_A,i_B\\o_A,o_B}}p(i_A,i_B|o_A,o_B)A_{o_A|i_A}\otimes B_{o_B|i_B},
\label{eq:def1_LOCC*}
\end{eqnarray}
which shows the equivalence of the representations of $M$ given by Eq.~\eqref{eq:LOCC*} and Eq.~\eqref{eq:LOCC**Wmatrix} when the process matrix is diagonal/classical.    

LOCC** is defined by a set of deterministic joint quantum operations whose CJ operators can be represented by Eq.~\eqref{eq:def1_LOCC*}, or in the form of
\begin{equation}
\mathcal{M}=\sum_{i_A,i_B,o_A,o_B}p(i_A,i_B|o_A,o_B)\mathcal{A}_{o_A|i_A}\otimes\mathcal{B}_{o_B|i_B},
\end{equation}
where $p(i_A,i_B|o_A,o_B)$ satisfies Eq.~\eqref{eq:LOCC**}. By definition, LOCC* is a larger set than LOCC**  since the condition for $p(i_A,i_B|o_A,o_B)$ given by Eq.~\eqref{eq:LOCC**} should be satisfied for {\it all} local operations in LOCC** while it should be satisfied only for {\it some} local operations in LOCC*.   

\subsection*{Equivalence of LOCC** and one-way LOCC}
In  the following proposition, we prove that LOCC** is equivalent to a probability mixture of one-way LOCC in  a bipartite setting.  Thus,  CC* used in implementing a deterministic joint quantum operation in LOCC* but not in LOCC cannot be represented by a classical process matrix. 

\begin{proposition}
LOCC** is equivalent to a probability mixture of one-way LOCC in a bipartite setting.
\end{proposition}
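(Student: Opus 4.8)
The plan is to translate the proposition into a purely classical statement about the process distribution $p(i_A,i_B|o_A,o_B)$ and then prove that statement. By the classical-process-matrix analysis above, an element of LOCC** has the form $\mathcal{M}=\sum_{i_A,i_B,o_A,o_B}p(i_A,i_B|o_A,o_B)\,\mathcal{A}_{o_A|i_A}\otimes\mathcal{B}_{o_B|i_B}$ as in Eq.~(\ref{eq:LOCC*}), where $p$ satisfies the validity condition (\ref{eq:def1_CC**}) for all local conditional distributions. A one-way LOCC map from Alice to Bob is exactly the case $p(i_A,i_B|o_A,o_B)=q(i_A)\,r(i_B|o_A)$ (Alice's input independent of every output, Bob's input depending only on Alice's output), and symmetrically for Bob to Alice. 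Hence it suffices to show that validity forces the splitting $p=\lambda\,q(i_A)r(i_B|o_A)+(1-\lambda)\,s(i_B)t(i_A|o_B)$; substituting this into $\mathcal{M}$ exhibits it as a probability mixture of one-way LOCC maps. The converse inclusion is immediate: each ordered $p$ is valid, validity is preserved under convex combinations because (\ref{eq:def1_CC**}) is linear in $p$, and two one-way protocols are merged into a single LOCC** element by taking common refinements of the two parties' instruments, which is routine bookkeeping.

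First I would reduce the validity condition to deterministic strategies: since (\ref{eq:def1_CC**}) is multilinear in $p_A,p_B$ and every distribution is a convex combination of point distributions, it is equivalent to $\sum_{i_A,i_B}p(i_A,i_B|f(i_A),g(i_B))=1$ for all functions $f,g$. Next I would establish no-self-signaling by comparing two functions $f,f'$ that differ only at a single value of $i_A$: their difference forces $\sum_{i_B}p(i_A,i_B|o_A,g(i_B))$ to be independent of $o_A$ for every $g$, and symmetrically for the other party. In particular the marginal $\sum_{i_B}p(i_A,i_B|o_A,o_B)$ does not depend on $o_A$, so a party's own output never feeds back into its own input.

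The key structural step is the classification of \emph{deterministic} valid processes, which after no-self-signaling take the form $p(i_A,i_B|o_A,o_B)=\delta_{i_A,\alpha(o_B)}\delta_{i_B,\beta(o_A)}$. Validity then says that for every $f,g$ the self-consistency relations $i_A=\alpha(g(i_B))$ and $i_B=\beta(f(i_A))$ have a unique solution, i.e.\ the induced map $\Phi=\alpha\circ g\circ\beta\circ f$ on Alice's input set has a unique fixed point for all $f,g$. I would show this forces $\alpha$ or $\beta$ to be constant. If both were non-constant, pick $o_B^{(1)},o_B^{(2)}$ with $\alpha(o_B^{(1)})=i_A^{(1)}\neq i_A^{(2)}=\alpha(o_B^{(2)})$ and $o_A^{(1)},o_A^{(2)}$ with $\beta(o_A^{(1)})=i_B^{(1)}\neq i_B^{(2)}=\beta(o_A^{(2)})$, then choose $f,g$ so that $\Phi$ transposes $i_A^{(1)}\leftrightarrow i_A^{(2)}$ and maps every other input into this pair; the resulting $\Phi$ is fixed-point-free, contradicting uniqueness. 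A constant $\alpha$ means Alice's input is fixed regardless of Bob (Alice acts first), and a constant $\beta$ means Bob acts first, so every deterministic valid process is ordered.

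Finally I would lift the deterministic classification to arbitrary valid processes by convexity: since the validity conditions are linear in $p$ and the entries are bounded, the valid processes form a polytope, and it remains to argue that its extreme points are deterministic (hence ordered by the previous step), so every valid process is a convex mixture of ordered ones. This last step is the main obstacle: whereas the deterministic case is settled cleanly by the fixed-point/transposition argument, ruling out extremal genuinely \emph{bidirectional} processes is the technical heart and is precisely the statement that bipartite classical processes are causally separable. I would handle it by a perturbation argument on the polytope, exhibiting for any non-deterministic valid $p$ a nonzero signed perturbation that annihilates all the no-self-signaling and normalization constraints and is supported on the positive entries of $p$; such a perturbation keeps $p$ in the polytope in both directions and thus certifies that the extreme points are $0/1$-valued.
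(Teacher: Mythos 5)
Your reduction to the classical polytope, the restriction to deterministic local strategies, the no-self-signaling marginals, and the fixed-point/transposition classification of \emph{deterministic} valid processes (one of $\alpha$, $\beta$ must be constant) are all sound, and they go well beyond what the paper actually does: the paper does not reprove causal separability at all, but imports from Ref.~[OFC] the theorem that every bipartite \emph{classical} process matrix decomposes as $W=qW_{A\rightarrow B}+(1-q)W_{B\rightarrow A}$, and then only performs the bookkeeping of factoring the diagonal entries as $p_{A\rightarrow B}(i_A,i_B|o_A,o_B)=p_{A\rightarrow B}(i_A)\,p'_{A\rightarrow B}(i_B|o_A,i_A)$ to exhibit each term as one-way LOCC. (Minor point: your normal form $q(i_A)r(i_B|o_A)$ for a one-way process is too narrow; the correct form, and the one the paper derives, lets $r$ depend on $i_A$ as well, which is still one-way LOCC since Alice can forward $i_A$ along with $o_A$.)

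The genuine gap is your final step. Given your step~3, the assertion ``every extreme point of the valid polytope is deterministic'' is not a reduction of the problem --- it is \emph{equivalent} to the proposition: a one-way process $q(i_A)r(i_B|o_A,i_A)$ is itself a mixture of deterministic ordered processes, so causal separability holds if and only if the extreme points are the deterministic ordered ones. Your proposed ``perturbation argument'' merely restates the definition of a non-extreme point (a valid signed perturbation supported on the positive entries exists if and only if $p$ is not extremal); you give no construction of such a perturbation for an arbitrary non-deterministic valid $p$, and for a polytope cut out by $0/1$ equality constraints there is no general principle forcing integral vertices. So the technical heart of the statement is still unproven in your write-up. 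To close it you must either import the causal-separability result for classical process matrices as the paper does, or supply an actual argument (e.g.\ the basis-expansion argument in the supplementary material of Ref.~[OFC], or the explicit polytope characterization of bipartite classical processes); note also that any such argument must genuinely use bipartiteness, since for three parties there exist deterministic logically consistent classical processes that are \emph{not} causally separable, so a purely formal polytope argument that does not see the number of parties cannot succeed.
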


\begin{proof}
In \cite{OFC}, it is shown that any classical process matrix is {\it causally separable}, i.e. the CJ operator of a classical process matrix $W$ can be decomposed into the form
\begin{equation}
W=qW_{A\rightarrow B}+(1-q)W_{B\rightarrow A},
\end{equation}
where $q\in [0,1]$, $W_{A\rightarrow B}\in Pos(\mathcal{H}_{I_A}\otimes\mathcal{H}_{O_A}\otimes\mathcal{H}_{I_B}\otimes\mathcal{H}_{O_B})$ is diagonal with respect to the computational basis and satisfies the conditions given by 
\begin{eqnarray}
W_{A\rightarrow B}&=&\mathbb{I}_{O_B}\otimes W_{I_A,O_A,I_B},\nonumber\\
\mathrm{tr}_{I_B}\left[W_{I_A,O_A,I_B}\right]&=&\mathbb{I}_{O_A}\otimes\rho_{I_A},\nonumber\\
\mathrm{tr}_{I_A}\left[\rho_{I_A}\right]&=&1,
\label{eq:AtoB}
\end{eqnarray}
and similar conditions are satisfied by $W_{B\rightarrow A}\in Pos(\mathcal{H}_{I_A}\otimes\mathcal{H}_{O_A}\otimes\mathcal{H}_{I_B}\otimes\mathcal{H}_{O_B})$.  

In \cite{Chiribella1},  it is proven that an operator $W_{A\rightarrow B}$ satisfying the conditions given by Eq.~\eqref{eq:AtoB}  but not necessarily being diagonal in the computational basis corresponds to a special type of process matrix called {\it quantum comb} where Alice's operation and Bob's operation are linked by quantum communication from Alice to Bob.   Thus a causally separable process can be interpreted as a probabilistic mixture of quantum communication from Alice to Bob and from Bob to Alice. When a causally separable process is classical (diagonal with respect to the computational basis), the process can be interpreted as a probabilistic mixture of classical communication from Alice to Bob and from Bob to Alice.

Let us denote the diagonal elements of $W_{A\rightarrow B}$ and $W_{B\rightarrow A}$ with respect to the computational basis by $p_{A\rightarrow B}(i_A,i_B|o_A,o_B)$ and $p_{B\rightarrow A}(i_A,i_B|o_A,o_B)$, respectively.   It is easy to verify that  $p_{A\rightarrow B}(i_A,i_B|o_A,o_B)$ and $p_{B\rightarrow A}(i_A,i_B|o_A,o_B)$ are conditional probability distributions since $W_{A\rightarrow B}$ and $W_{B\rightarrow A}$ correspond to classical process matrices.

Then $p(i_A,i_B|o_A,o_B)$ satisfying Eq.~\eqref{eq:LOCC**}
can be decomposed into $p(i_A,i_B|o_A,o_B)=qp_{A\rightarrow B}(i_A,i_B|o_A,o_B)+(1-q)p_{B\rightarrow A}(i_A,i_B|o_A,o_B)$.
Eq.~\eqref{eq:AtoB} implies that $p_{A\rightarrow B}(i_A,i_B|o_A,o_B)$ does not depend on $o_B$.  We define $p_{A\rightarrow B}(i_A,i_B|o_A):=p_{A\rightarrow B}(i_A,i_B|o_A,o_B)$.  The operator $W_{I_A,O_A,I_B}$ in Eq.~\eqref{eq:AtoB} is given by
\begin{eqnarray}
W_{I_A,O_A,I_B}=&\sum_{i_A,i_B,o_A}p_{A\rightarrow B}(i_A,i_B|o_A)\nonumber\\&\ket{i_A,o_A,i_B}\bra{i_A,o_A,i_B}_{I_A,O_A,I_B}.
\end{eqnarray}
Due to Eq.~\eqref{eq:AtoB}, $\sum_{i_B}p_{A\rightarrow B}(i_A,i_B|o_A)$ does not depend on $o_A$.  We define $p_{A\rightarrow B}(i_A):=\sum_{i_B}p_{A\rightarrow B}(i_A,i_B|o_A)$.
We further define a set $X=\{x|p_{A\rightarrow B}(x)\neq 0\}$ and
\begin{eqnarray}
p'_{A\rightarrow B}(i_B|o_A,i_A)&:=&\frac{p_{A\rightarrow B}(i_A,i_B|o_A)}{p_{A\rightarrow B}(i_A)}\,\,(i_A\in X)\nonumber\\
p'_{A\rightarrow B}(i_B|o_A,i_A)&:=&p'_{A\rightarrow B}(i_B)\,\,(i_A\notin X),
\end{eqnarray}
where $p'_{A\rightarrow B}(y)$ is an arbitrary probability distribution.  Note that $p'_{A\rightarrow B}(i_B|o_A,i_A)$ satisfies all the properties required for a conditional probability distribution. Thus, we have
\begin{equation}
p_{A\rightarrow B}(i_A,i_B|o_A,o_B)=p_{A\rightarrow B}(i_A)p'_{A\rightarrow B}(i_B|o_A,i_A),
\end{equation}
and similarly,
\begin{equation}
p_{B\rightarrow A}(i_A,i_B|o_A,o_B)=p_{B\rightarrow A}(i_B)p'_{B\rightarrow A}(i_A|o_B,i_B).
\end{equation}
Combining these results, a deterministic joint quantum operation $\mathcal{M}$ in LOCC** is given by
\begin{eqnarray}
\mathcal{M}&=&q\sum_{\substack{i_A,i_B\\o_A,o_B}}p_{A\rightarrow B}(i_A,i_B|o_A,o_B)\mathcal{A}_{o_A|i_A}\otimes\mathcal{B}_{o_B|i_B}\nonumber\\
&&+(1-q)\sum_{\substack{i_A,i_B\\o_A,o_B}}p_{B\rightarrow A}(i_A,i_B|o_A,o_B)\mathcal{A}_{o_A|i_A}\otimes\mathcal{B}_{o_B|i_B}.\nonumber\\
\end{eqnarray}
Since the first term can be represented by $q\sum_m\mathcal{A}_m\otimes\mathcal{B}_{|m}$ where 
\begin{eqnarray}
m&:=&(i_A,o_A),\nonumber\\
\mathcal{A}_{i_A,o_A}&:=&p_{A\rightarrow B}(i_A)\mathcal{A}_{o_A|i_A},\nonumber\\
\mathcal{B}_{|i_A,o_A}&:=&\sum_{i_B,o_B}p'_{A\rightarrow B}(i_B|o_A,i_A)\mathcal{B}_{o_B|i_B},
\end{eqnarray}
it represents an operation in one-way LOCC.   The deterministic joint quantum operation $\mathcal{M}$ in LOCC** is concluded to be a probability mixture of one-way LOCC.

\end{proof}

\nocite{*}
\bibliographystyle{eptcs}
\bibliography{generic}

\begin{thebibliography}{99}
\bibitem{Bell} J. S. Bell, Physics {\bf 1}, 195 (1964); J. F. Clauser, M. A. Horne, A. Shimony, and R. A. Holt, Phys. Rev. Lett. {\bf 23}, 880 (1969).
\bibitem{PRbox} S. Popescu and D. Rohrlich, Found. Phys. {\bf 24}, 3, pp. 379-385 (1994).
\bibitem{teleportation} C. H. Bennett, G. Brassard, C. Crepeau, R. Jozsa, A. Peres, and W. K. Wootters, Phys. Rev. Lett. {\bf 70}, 1895 (1993).
\bibitem{VVedral} V. Vedral, M. B. Plenio, M. A. Rippin, and P. L. Knight, Phys. Rev. Lett. {\bf 78}, 2275, (1997).
\bibitem{MBPlenio} M. B. Plenio and S. Virmani, Quant. Inf. Comp. {\bf 7}, pp. 1-51 (2007).
\bibitem{Horodecki} M. Horodecki, P. Horodecki, and R. Horodecki, Phys. Rev. Lett. {\bf 84}, 2014 (2000).
\bibitem{CLMOW12}E. Chitambar, D. Leung, L. Mancinska, M. Ozols, and A. Winter, Commun. Math. Phys. {\bf 328}, 1, pp. 303-326 (2014).
\bibitem{OFC} O. Oreshkov, F. Costa, and C. Brukner, Nature communications {\bf 3}, 1092 (2012).
\bibitem{Chiribella1} G. Chiribella, G. M. D’Ariano, and P. Perinotti, Phys. Rev. A {\bf 80}, 022339 (2009).
\bibitem{Chiribella2} G. Chiribella, G. M. D’Ariano, P. Perinotti, and B. Valiron, Phys. Rev. A {\bf 88}, 022318 (2013).
\bibitem{Hardy} L. Hardy, arXiv:gr-qc/0509120 (2005).
\bibitem{Rains} E.M. Rains, Phys. Rev. A {\bf 60}, 173 (1999); {\bf 60}, 179 (1999).
\bibitem{9state} C. H. Bennett, D. P. DiVincenzo, C. A. Fuchs, T. Mor, E. Rains, P. W. Shor, J. A. Smolin and W. K. Wootters, Phys. Rev. A {\bf 59}, 1070 (1999).
\bibitem{JNiset} J. Niset and N. J. Cerf, Phys. Rev. A {\bf 74}, 052103 (2006).
\bibitem{DiVincezo} D. P. DiVincenzo, T. Mor, P. W. Shor, J. A. Smolin, and B. M. Terhal, Commun. Math. Phys. {\bf 238}, 3, pp. 379-410 (2003).
\bibitem{EChitambar} E. Chitambar and R. Duan, Phy. Rev. Lett. {\bf 103}, 110502 (2009).
\bibitem{pCTC1} C. H. Bennett. talk at QUPON, Vienna, Austria, May (2005).
\bibitem{pCTC2} S. Lloyd, L. Maccone, R. Garcia-Patron, V. Giovannetti, and Y. Shikano, Phys. Rev. D {\bf 84}, 025007 (2011).
\bibitem{pCTC3} S. Lloyd, L. Maccone, R. Garcia-Patron, V. Giovannetti, Y. Shikano, S. Pirandola, L. A. Rozema, A. Darabi, Y. Soudagar, L. K. Shalm, and A. M. Steinberg, Phys. Rev. Lett. {\bf 106}, 040403 (2011).
\bibitem{Qswitchspeed} T. Colnaghi, G. M. D'Ariano, S. Facchini, and P. Perinotti, Phys. Lett. A {\bf 376}, 45 (2012).
\bibitem{Nakago} K. Nakago, M. Hajdusek, S. Nakayama and M. Murao, Phys. Rev. A {\bf 92}, 062316 (2015).
\bibitem{Qswitch} L. M. Procopio, A. Moqanaki, M. Araujo, F. Costa, I. A. Calafell, E. G. Dowd, D. R. Hamel, L. A. Rozema, C. Brukner, and P. Walther, Nature Communications {\bf 6}, 7913 (2015).
\end{thebibliography}

\end{document}